\documentclass[journal]{IEEEtran}

\usepackage{cite}
\usepackage{array}
\usepackage{color}
\usepackage{float}
\usepackage{stfloats}
\usepackage[cmex10]{amsmath}
\usepackage{amsthm}
\usepackage{nomencl}						
\usepackage[normalem]{ulem}			      
\usepackage{diagbox}			 			
\usepackage{colortbl}						
\usepackage{multirow}						
\usepackage{tabularx}
\usepackage{placeins}
\usepackage{algorithm}
\usepackage{mathrsfs}
\usepackage{mathdots}
\usepackage{amssymb}
\usepackage{arydshln}
\usepackage{textcomp,gensymb}
\usepackage[noend]{algpseudocode}
\usepackage{bm}
\usepackage{url}
\usepackage[T1]{fontenc}
\usepackage{cuted}
\usepackage{geometry}
\usepackage{amsmath}
\usepackage{xcolor}
\usepackage{xpatch}
\usepackage{wrapfig}
\usepackage{makecell}
\usepackage{fancyhdr} 

\ifCLASSINFOpdf
  \usepackage[pdftex]{graphicx}
  \DeclareGraphicsExtensions{.pdf,.jpeg,.png}
\else
  \usepackage[dvips]{graphicx}
  \DeclareGraphicsExtensions{.eps}
\fi

\ifCLASSOPTIONcompsoc
  \usepackage[caption=false,font=normalsize,labelfont=sf,textfont=sf]{subfig}
\else
  \usepackage[caption=false,font=footnotesize]{subfig}
\fi

\newcommand{\figref}[1]{\figurename~\ref{#1}}

\newcommand{\diag}{\text{diag}}

\ifCLASSOPTIONonecolumn

\else

\renewcommand{\arraystretch}{1.5}
\fi

\makeatletter
\ExplSyntaxOn
\cs_new:Npn \bibColoredItems #1#2
  {
    \clist_map_inline:nn {#2} { \cs_new:cpn {bib@colored@##1} {#1} } 
  }
\ExplSyntaxOff

\newcommand\bib@setcolor[1]{%
  \ifcsname bib@colored@#1\endcsname
    \expanded{\noexpand\color{\csname bib@colored@#1\endcsname}}%
  \else
    \normalcolor
  \fi
}

\xpatchcmd\@bibitem
  {\item}
  {\bib@setcolor{#1}\item}
  {}{\fail}

\xpatchcmd\@lbibitem
  {\item}
  {\bib@setcolor{#2}\item}
  {}{\fail}
\makeatother

\newtheorem{theorem}{Theorem} 
\newtheorem{lemma}[theorem]{Lemma} 
 
\newtheorem{proposition}[theorem]{Proposition} 
\theoremstyle{definition} 
\newtheorem{definition}[theorem]{Definition} \theoremstyle{remark} 
\newtheorem{remark}[theorem]{Remark}

    \emergencystretch=\maxdimen
\begin{document}
\bstctlcite{IEEEexample:BSTcontrol}
\setcounter{page}{1}
\title{A Small-Gain-Like Framework for Large-Signal Stability Evaluation of Multi-Converter Systems}
\author
{
    Qiannan~Qu, \IEEEmembership{Student Member}, \IEEEmembership{IEEE}, 
    Kaiwen~Chen, \IEEEmembership{Member}, \IEEEmembership{IEEE}, 
    \\Xin~Xiang, \IEEEmembership{Member}, \IEEEmembership{IEEE},
    Wuhua~Li, \IEEEmembership{Senior Member}, \IEEEmembership{IEEE}, and
    Yunjie~Gu, \IEEEmembership{Senior Member}, \IEEEmembership{IEEE} \vspace{-2em}
    \thanks{This work has been submitted to the IEEE for possible publication.
    Copyright may be transferred without notice, after which this version may no longer be accessible.}
    \vspace{-0.5\baselineskip}
    }
\maketitle
\thispagestyle{fancy}
\renewcommand{\headrulewidth}{0pt}
\begin{abstract}
The increasing penetration of grid-connected converters has greatly altered the large-signal behavior of power systems. Their angle dynamics, shaped by diverse control algorithms and coupled through complex circuit interactions, pose substantial challenges to large-signal stability evaluation of multi-converter systems.
To resolve this issue, the small-gain theorem, which characterizes the dissipation capability of interconnected systems (the small-gain-like property) via the individual dissipation capabilities of subsystems, is introduced to investigate transient angle motions in multi-converter systems. 
A large-signal model involving a set of interconnected relative angle motions is first developed, and the small-gain-like property is then established for multi-converter systems within certain angle limits, which further enables the construction of Lyapunov functions. 
Based on this, an ellipsoidal forward-invariant region is identified inside the angle-limit region, which serves as an effective estimate of the large-signal stability region for multi-converter systems. 
The method is further applied to a paralleled system and a four-converter system, where the large-signal stability boundaries are explicitly computed and subsequently validated through experiments.
The proposed small-gain-like based large-signal stability evaluation method enables quantitative stability assessment in multi-converter systems with diverse control algorithms, which may provide a scalable framework for large-signal stability evaluation and parameter design in modern power systems.

\end{abstract}
\begin{IEEEkeywords}
Large-signal stability, Multi-converter system, Small-gain-like property, Dissipation capability
\end{IEEEkeywords}
\vspace{-5pt}
\section*{Nomenclature}
\addcontentsline{toc}{section}{Nomenclature}
\begin{IEEEdescription}[\IEEEusemathlabelsep\IEEEsetlabelwidth{$i1,  y1$}]
\item[$\theta_{li}$] Power angle of the $i$th GFL converter
\item[$\theta_{mi}$] Power angle of the $i$th GFM converter
\item[$I_{li}e^{j\varphi_{li}}$] Current reference vector of the $i$th GFL converter
\item[$U_{mi}$] Voltage reference of the $i$th GFM converter
\item[$U_{lqi}$] Q-axis voltage component of the $i$th GFL converter
\item[$P_{emi}$] Active power output of the $i$th GFM converter   
\item[$P_{0mi}$] Active power reference of the $i$th GFM converter   
\item[$K_{pi}$] PLL proportional gain of the $i$th GFL converter
\item[$D_{pi}$] Droop coefficient of the $i$th GFM converter 
\item[$\mathbf{Y}_{net}$] System admittance matrix
\item[$\delta_{ij}^{ll}$] Relative angle between the $i$th GFL converter and the $j$th GFL converter
\item[$\delta_{ij}^{lm}$] Relative angle between the $i$th GFL converter and the $j$th GFM converter
\item[$\delta_{ij}^{mm}$] Relative angle between the $i$th GFM converter and the $j$th GFM converter
\item[$\delta_{ij}^{lle}$] Steady-state relative angle of $\delta_{ij}^{ll}$
\item[$\delta_{ij}^{lme}$] Steady-state relative angle of $\delta_{ij}^{lm}$
\item[$\delta_{ij}^{mme}$] Steady-state relative angle of $\delta_{ij}^{mm}$
\item[$\Delta\delta_{ij}^{lm}$] Relative angle deviation $\delta_{ij}^{lm}-\delta_{ij}^{lme}$
\item[$\Delta\delta_{ij}^{mm}$] Relative angle deviation $\delta_{ij}^{mm}-\delta_{ij}^{mme}$
\end{IEEEdescription}
\section{Introduction}\label{section_1}

\IEEEPARstart{T}{he} increasing penetration of grid-connected converters is reshaping modern power systems\cite{guPowerSystemStability2022,rocabertControlPowerConverters2012,hatziargyriouDefinitionClassificationPower2021}. In contrast to synchronous generators (SGs), whose dynamics are dictated by the physical laws of electromechanics, converter dynamics are predominantly governed by software-defined control strategies\cite{blaabjergOverviewControlGrid2006,zhongSelfSynchronizedSynchronvertersInverters2014,fangInertiaFutureMoreElectronics2019}. This paradigm shift from physics-governed to control-governed behavior fundamentally reshapes system dynamics 
and introduces profound challenges for large-signal stability analysis of systems containing multiple converters\cite{wangGridSynchronizationStabilityConverterBased2020,choopaniNewTransientStability2020,fuLargeSignalStabilityGridForming2021}.

Time-domain simulation has long been heavily relied on for large-signal stability studies\cite{kundurPowerSystemStability1994,nagelHighSpeedPowerSystem2013}. However, this approach is highly time-consuming, especially for converter-dominated systems that require smaller timesteps and a larger number of simulation scenarios to accurately capture their fast and nonlinear dynamics, making it unsuitable for real-time stability assessment\cite{anghelAlgorithmicConstructionLyapunov2013,vuLyapunovFunctionsFamily2016a}. 
Alternatively, efforts have been made to extend the direct methods, which are widely used for large-signal stability region estimation in SG-based systems\cite{chiangDirectMethodsStability2011,kundurPowerSystemStability1994}, to converter-dominated systems for rapid large-signal stability evaluation without time-domain simulations. 
Among them, the equal-area criterion (EAC) has been well extended to converters in a single-machine-infinite-bus (SMIB) setup, where the converter operates under either grid-following (GFL) or grid-forming (GFM) control structures\cite{wuDesignOrientedTransientStability2020,liIterativeEqualArea2023,leiQuantitativeIntuitiveVSG2023,liDampingTurningRule2023}. 
It has been shown that the power angle dynamics of both GFL and GFM converters can be reformulated into a swing-like model, to which EAC is applicable\cite{fuLargeSignalStabilityGridForming2021,maGeneralizedSwingEquation2022}. 
However, this methodology faces fundamental changes when scaling up to multi-converter systems. Due to the presence of nonuniform damping terms introduced by GFM converters and cosine-type interaction terms by GFL converters in the angle dynamics\cite{leiDestabilizingMechanismNonuniform2024,zhangInteractionTransientStability2025a}, the global energy function no longer exists for systems containing multiple GFL and GFM converters.

{\color{black}
To analyze the large-signal stability of systems with multiple converters, several attempts have been made to approximate or transform multi-converter systems into a set of equivalent SMIB systems by aggregating converters with similar dynamic characteristics into groups\cite{taulReducedOrderAggregatedModeling2021,palReducedOrderModelingTransient2023,chenAggregatedModelVirtual2021,shenLyapunovMethodBasedCoherent2025}. With the simplified SMIB models, the EAC method can be further extended for multi-converter system stability assessment\cite{xueExtendedEqualArea1989}.
However, converter aggregation relies on the identification of coherent angle dynamics, which is usually performed based on specific circuit topologies\cite{taulReducedOrderAggregatedModeling2021,palReducedOrderModelingTransient2023} or uniform control structures\cite{chenAggregatedModelVirtual2021,shenLyapunovMethodBasedCoherent2025}, and thus may fail to capture the dominant dynamics of large-scale systems or provide effective stability assessment.
}
In addition, algebraic reformulation techniques of Lyapunov functions have been developed in SG-dominated power systems to characterize their large-signal stability performance\cite{anghelAlgorithmicConstructionLyapunov2013,vuLyapunovFunctionsFamily2016a}. Nevertheless, these methods may face challenges in handling the various nonlinear terms inherent in multi-converter systems.

Therefore, a new mathematical tool that allows rigorous and scalable large-signal stability analysis for multi-converter systems is desired. To this end, this paper introduces a \textit{small-gain-like framework} from nonlinear control theory to provide a new pathway to solve the large-signal stability problem \cite{yMonotoneStabilityNonlinear1992,jiangSmallgainTheoremISS1994}. 
The small-gain theorem is a well-known result in robust control. It was originally proposed for standalone systems but was recently generalized to interconnected systems\cite{dashkovskiyISSSmallGain2007,jiangGeneralizationNonlinearSmallgain2008,chenActiveNodesNetwork2024}. Unlike the energy-based methods including EAC, the small-gain-like framework does not rely on the physical property (like energy conservation) of the system and is therefore very suitable to the control-governed behaviors of converters. 
The small-gain theorem states that a system consisting of two subsystems connected in a feedback loop is stable if each subsystem holds a certain property and the loop gain is less than one. Since the loop gain can be defined via an "induced norm", the small-gain theorem is suitable for both linear and nonlinear systems. 
On this basis, the theorem has been further extended to large-scale interconnected systems\cite{dashkovskiyISSSmallGain2007,jiangGeneralizationNonlinearSmallgain2008}, which can be intuitively interpreted as: if each subsystem possesses a certain stability property under the influence of other connected subsystems and their interactions in between do not amplify each other, then the same stability property can be extended to the entire system. 
In this manner, the dissipation capability of an interconnected system, referred to as the \textit{small-gain-like property}, can be derived from the individual dissipation capabilities of subsystems and their interconnections, 
which in turn enables the construction of a sum-type Lyapunov function that implies the overall stability properties\cite{chenActiveNodesNetwork2024}. 

In this paper, the small-gain-like framework is introduced to evaluate the large-signal stability of multi-converter systems containing different types of converters. 
The power angle motion of each converter is modeled, and the overall angle dynamics are fully captured by a set of interconnected relative angle motions. 
Based on this, the individual dissipation capabilities of relative angle motions among converters with different control structures are mathematically depicted, and the small-gain-like property of the whole system is established within certain angle limits, which provides a criterion for constructing a Lyapunov function.
Moreover, an ellipsoidal stability region within the angle-limit set is identified, which is forward invariant and guarantees convergence of all states inside it to the equilibrium. 
As a result, the proposed method can directly solve the large-signal stability boundary of multi-converter systems with diverse control algorithms and various circuit structures, and its flexible framework shows a strong potential for extension to larger systems for stability assessment and control optimization. This, in turn, provides structural insights into system strengths and vulnerabilities from the perspective of large-signal stability. 
The rest of this paper is organized as follows. Section~\ref{section_2} develops a large-signal model of multi-converter systems that captures the interconnected relative angle dynamics. 
In Section~\ref{section_3}, the small-gain-like property of multi-converter systems is investigated, and an ellipsoidal stability region is derived for large-signal stability assessment. 
The proposed method is further applied to evaluate the large-signal stability behavior of a paralleled system and a four-converter system in Section~\ref{section_4}, and the results are further validated through experiments in Section~\ref{section_5}. Finally, Section~\ref{section_6} concludes this paper.

\section{System Description and Modeling}\label{section_2}
This section first introduces the circuit and control structures of the multi-converter systems. On top of that, the relative angle motions between different types of converters are separately modeled, and a large-signal model describing the overall angle dynamics of the whole system is proposed.

\begin{figure}\centering
    \includegraphics[scale=0.92]{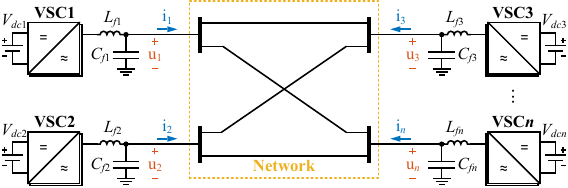}
    \vspace{-2pt}
    \caption{Generalized topology of a multi-converter system}
    \label{Fig1_Generalized}
    \vspace{-2pt}    
\end{figure}

\begin{figure}\centering
    \includegraphics[scale=1.1]{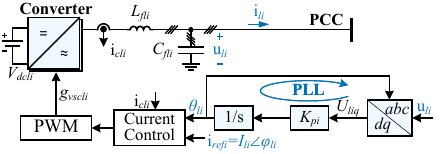}\\
    \footnotesize (a)\\
    \vspace{-2pt}
    \includegraphics[scale=1.1]{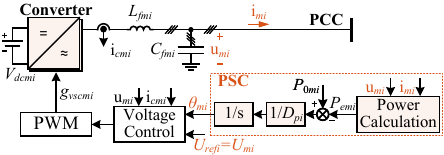}\\
    \footnotesize (b)\\
    \vspace{-2pt}
    \caption{Control structures of converters. (a) GFL converter. (b) GFM converter}
    \label{Fig2_Control}
    \vspace{-5pt}
\end{figure}

\vspace{-5pt}
\subsection{System Description}
The generalized circuit structure of a system containing $n$ converters is illustrated in \figref{Fig1_Generalized}. Each converter connects to the network through an LC filter, and the terminal voltage and current at the $i$th converter node are denoted by $\mathrm{u}_i$ and $\mathrm{i}_i$, respectively.
{\color{black}
The converters in the system shown in \figref{Fig1_Generalized} are classified into GFL converters and GFM converters according to their distinct control structures. For the GFL converter, it synchronizes with its terminal voltage through a first-order PLL structure and tracks the constant current reference $\mathrm{i}_{ref}$ via the current control loop, as shown in \figref{Fig2_Control}(a). In this manner, it can realize precise power delivery and rapid post-fault power recovery.
For the GFM converter in \figref{Fig2_Control}(b), it generates the terminal voltage angle reference using the power synchronization control (PSC) algorithm, and maintains a constant voltage amplitude $U_{ref}$ via the inner voltage control loop.
Therefore, the GFM converter exhibits voltage-source characteristics and can provide strong voltage support for the overall system.
}
{\color{black}
Considering recent developments in transient over-current capability enhancement technologies\cite{liuStabilityConstrainedTransient2024,jiangITjReducedPower2021}, power electronic converters have gradually gained the capability to withstand certain large currents over short timescales. Therefore, GFM converters in multi-converter systems are assumed to maintain their normal operating mode under large disturbances, such that their intrinsic synchronization dynamics can be clearly characterized. Moreover, the same modeling method also has the potential to be extended to some current-limiting operating scenarios by modifying the corresponding circuit relationships and parameters\cite{shuaiTransientAngleStability2019,wangTransientStabilityAnalysis2025}.
}
Considering that the inner voltage and current loops typically respond much faster than the PLL and the PSC dynamics\cite{leiQuantitativeIntuitiveVSG2023,wuDesignOrientedTransientStability2019}, the GFL output current and the GFM terminal voltage are assumed to perfectly track their references. 
In this way, assuming that the vector current reference of the $i$th GFL converter in its own PLL-defined frame is $I_{li}e^{j\varphi_{li}}$ and the voltage amplitude reference of the $i$th GFM converter is $U_{mi}$, the output current of the $i$th GFL converter, $\mathrm{i}_{li}$, and the terminal voltage of the $i$th GFM converter, $\mathrm{u}_{mi}$, are determined as
\begin{equation}
   \mathrm{i}_{li}=I_{li}e^{j(\theta_{li}+\varphi_{li})}\quad\mathrm{u}_{mi}=U_{mi}e^{j\theta_{mi}}
\end{equation}
where $\theta_{li}$ and $\theta_{mi}$ are the power angles of the $i$th GFL converter and the $i$th GFM converter.

Define the current and voltage vectors of the GFL converters as $\mathbf{i}_{l}$ and $\mathbf{u}_l$, and those of the GFM converters as $\mathbf{i}_m$ and $\mathbf{u}_m$. For a system with $p$ GFL converters and $q$ GFM converters, the current and voltage vectors are represented as
\begin{equation}
    \begin{aligned}
        \mathbf{i}_l&=[\mathrm{i}_{l1},\dots,\mathrm{i}_{lp}]^{\mathrm{T}}\quad\quad\ \mathbf{u}_l=[\mathrm{u}_{l1},\dots,\mathrm{u}_{lp}]^{\mathrm{T}}\\[-2pt]
        \mathbf{i}_m&=[\mathrm{i}_{m1},\dots,\mathrm{i}_{mq}]^{\mathrm{T}}\quad\mathbf{u}_m=[\mathrm{u}_{m1},\dots,\mathrm{u}_{mq}]^{\mathrm{T}}.
    \end{aligned}
\end{equation}

Moreover, define the system admittance matrix as $\mathbf{Y}_{net}$, the current and voltage vectors satisfy
\begin{equation}\label{equ_Circuit_MM}
    \begin{bmatrix}
    \mathbf{i}_l\\[-2pt]
    \mathbf{i}_m
    \end{bmatrix}
    =
    \underbrace{\begin{bmatrix}
    \mathbf{Y}_{ll} & \mathbf{Y}_{lm} \\[-2pt]
    \mathbf{Y}_{ml} & \mathbf{Y}_{mm}
    \end{bmatrix}}_{\mathbf{Y}_{net}}
    \begin{bmatrix}
    \mathbf{u}_l \\[-2pt]
    \mathbf{u}_m
    \end{bmatrix}.
\end{equation}

With \eqref{equ_Circuit_MM}, the terminal voltage of GFL converters and the output current of GFM converters are calculated as
\begin{equation}
    \begin{aligned}
        \mathbf{u}_{l}=&\mathbf{Y}_{ll}^{-1}\left(\mathbf{i}_{l}-\mathbf{Y}_{lm}\mathbf{u}_{m}\right)\\[-2pt]
        \mathbf{i}_{m}=&\mathbf{Y}_{ml}\mathbf{Y}_{ll}^{-1}\mathbf{i}_{l}
                        +\left(\mathbf{Y}_{mm}-\mathbf{Y}_{ml}\mathbf{Y}_{ll}^{-1}\mathbf{Y}_{lm}\right)\mathbf{u}_m.
    \end{aligned}
\end{equation}

{\color{black}
Based on the network reduction and current-voltage relationships, the electrical quantities adopted for synchronization control can be derived. 
The quadrature voltage component of the $i$th GFL converter and the active power of the $i$th GFM converter are expressed as follows
\begin{equation}\label{equ_UqPe_org}
    \begin{aligned}
    &U_{lqi}=\Im{(\mathrm{u}_{li}e^{-j\theta_{li}})}=\Im{\left\{\sum_{k=1}^{p}\left[(\mathbf{Y}_{ll}^{-1})_{ik}I_{lk}e^{j(\theta_{lk}-\theta_{li}+\varphi_{lk})}\right]\right\}}\\
    &-\Im{\left\{\sum_{r=1}^{q}\left[(\mathbf{Y}_{ll}^{-1}\mathbf{Y}_{lm})_{ir}U_{mr}e^{j(\theta_{mr}-\theta_{li})}\right]\right\}}\\
    &P_{emi}=\Re{(\mathrm{u}_{mi}\mathrm{i}_{mi}^*)}\\
           &=\Re\left\{\sum_{k=1}^{p}\left[(\mathbf{Y}_{ml}\mathbf{Y}_{ll}^{-1})_{ik}U_{mi}I_{lk}e^{j(\theta_{lk}-\theta_{mi}+\varphi_{lk})}\right]\right\}\\
          &+\Re{\left\{\sum_{r=1}^{q}\left[(\mathbf{Y}_{mm}-\mathbf{Y}_{ml}\mathbf{Y}_{ll}^{-1}\mathbf{Y}_{lm})_{ir}U_{mi}U_{mr}e^{j(\theta_{mr}-\theta_{mi})}\right]\right\}}.
    \end{aligned}
\end{equation}

For notational simplicity, circuit parameters are defined as
\begin{equation}
  \begin{aligned}
    &Z_{ik}^{ll}e^{j\phi_{ik}^{ll}}=(\mathbf{Y}_{ll}^{-1})_{ik}\quad Y_{ir}^{mm}e^{j\phi_{ir}^{mm}}=(\mathbf{Y}_{mm}-\mathbf{Y}_{ml}\mathbf{Y}_{ll}^{-1}\mathbf{Y}_{lm})_{ir}\\
    &G_{ir}^{lm}e^{j\phi_{ir}^{lm}}=(\mathbf{Y}_{ll}^{-1}\mathbf{Y}_{lm})_{ir}\quad G_{ik}^{ml}e^{j\phi_{ik}^{ml}}=(\mathbf{Y}_{ml}\mathbf{Y}_{ll}^{-1})_{ik}.
  \end{aligned}
\end{equation}

Accordingly, the expressions in \eqref{equ_UqPe_org} can be rewritten as 
\begin{equation}\label{equ_UqPe}
  \begin{aligned}
    U_{lqi}&=\sum\nolimits_{k=1}^p\left[Z_{ik}^{ll}I_{lk}\sin{\left(\theta_{lk}-\theta_{li}+\varphi_{lk}+\phi_{ik}^{ll}\right)}\right]\\
    &-\sum\nolimits_{r=1}^{q}\left[G^{lm}_{ir}U_{mr}\sin{\left(\theta_{mr}-\theta_{li}+\phi_{ir}^{lm}\right)}\right]\\
    P_{emi}&=\sum\nolimits_{k=1}^{p}\left[G^{ml}_{ik}U_{mi}I_{lk}\cos{\left(\theta_{lk}-\theta_{mi}+\varphi_{lk}+\phi_{ik}^{ml}\right)}\right]\\
    &+\sum\nolimits_{r=1}^{q}\left[Y^{mm}_{ir}U_{mi}U_{mr}\cos{\left(\theta_{mr}-\theta_{mi}+\phi_{ir}^{mm}\right)}\right].
  \end{aligned}
\end{equation}
}
On this basis, the large-signal angle motions of individual converters are mathematically described by \eqref{equ_AngleDynamics}, where $K_{pi}$ is the PLL proportional gain of the $i$th GFL converter, and $D_{pi}$ and $P_{0mi}$ denote the droop coefficient and the power reference of the $i$th GFM converter.
\begin{equation}\label{equ_AngleDynamics}
  \begin{aligned}
  \dot\theta_{li}=K_{pi}U_{lqi}\quad\quad
  \dot\theta_{mi}=\left(P_{0mi}-P_{emi}\right)/D_{pi}
  \end{aligned}
\end{equation}

\vspace{-5pt}
\subsection{Relative Angle Motion Characterization}
In this subsection, the relative angle motions between different types of converters are modeled by combining \eqref{equ_UqPe} and \eqref{equ_AngleDynamics}. The relative angles $\delta_{ij}^{ll}$ between two GFL converters, $\delta_{ij}^{lm}$ between a GFL converter and a GFM converter, and $\delta_{ij}^{mm}$ between two GFM converters are defined as
\begin{equation}
    \delta_{ij}^{ll}=\theta_{li}-\theta_{lj}\quad
    \delta_{ij}^{lm}=\theta_{li}-\theta_{mj}\quad
    \delta_{ij}^{mm}=\theta_{mi}-\theta_{mj}.
\end{equation}

{\color{black}
The relative angle motions for $\delta_{ij}^{ll}$, $\delta_{ij}^{lm}$ and $\delta_{ij}^{mm}$ are then mathematically described, as shown in \eqref{equ_Relative_LL} to \eqref{equ_Relative_MM}.
From the derived relative angle motion equations, it is evident that the dynamics of any relative angle between two converters are influenced by all other relative angles in the system. Consequently, the angle dynamics of a multi-converter system form a large-scale interconnected structure with complex interaction characteristics, which makes intuitive and quantitative large-signal stability assessment challenging. 
Moreover, in a system containing $n$ converters, there are $n(n-1)/2$ relative angle motions between converter pairs, while the overall large-signal dynamics can be described by a nonlinear equation of order $(n-1)$. This implies that many of these relative motions are not independent but are constrained by others, which requires further simplification. 
}

\begin{equation}\label{equ_Relative_LL}
  \begin{aligned}
    \dot\delta_{ij}^{ll}&=K_{pi}Z^{ll}_{ij}I_{lj}\sin{\left(-\delta_{ij}^{ll}+\varphi_{lj}+\phi_{ij}^{ll}\right)}\\
    &-K_{pj}Z^{ll}_{ji}I_{li}\sin{\left(\delta_{ij}^{ll}+\varphi_{li}+\phi_{ji}^{ll}\right)}\\
    &+K_{pi}\sum\nolimits_{k=1,k\neq j}^p\left[Z^{ll}_{ik}I_{lk}\sin{\left(\delta_{ki}^{ll}+\varphi_{lk}+\phi_{ik}^{ll}\right)}\right]\\
    &-K_{pi}\sum\nolimits_{r=1}^{q}\left[G^{lm}_{ir}U_{mr}\sin{\left(-\delta_{ir}^{lm}+\phi_{ir}^{lm}\right)}\right]\\
    &-K_{pj}\sum\nolimits_{k=1,k\neq i}^p\left[Z^{ll}_{jk}I_{lk}\sin{\left(\delta_{kj}^{ll}+\varphi_{lk}+\phi_{jk}^{ll}\right)}\right]\\
    &+K_{pj}\sum\nolimits_{r=1}^{q}\left[G^{lm}_{jr}U_{mr}\sin{\left(-\delta_{jr}^{lm}+\phi_{jr}^{lm}\right)}\right]
  \end{aligned}
\end{equation}

\begin{equation}\label{equ_Relative_LM}
  \begin{aligned}
    \dot\delta_{ij}^{lm}&=K_{pi}G_{ij}^{lm}U_{mj}\sin{\left(\delta_{ij}^{lm}-\phi_{ij}^{lm}\right)}\\
    &+G_{ji}^{ml}U_{mj}I_{li}\cos{\left(\delta_{ij}^{lm}+\varphi_{li}+\phi_{ji}^{ml}\right)}/D_{pj}-P_{0mj}/D_{pj}\\
    &+K_{pi}\sum\nolimits_{k=1}^p\left[Z^{ll}_{ik}I_{lk}\sin{\left(\delta_{ki}^{ll}+\varphi_{lk}+\phi_{ik}^{ll}\right)}\right]\\
    &+K_{pi}\sum\nolimits_{r=1,r\neq j}^{q}\left[G^{lm}_{ir}U_{mr}\sin{\left(\delta_{ir}^{lm}-\phi_{ir}^{lm}\right)}\right]\\
    &+\sum\nolimits_{k=1,k\neq i}^{p}\left[G_{jk}^{ml}U_{mj}I_{lk}\cos{\left(\delta_{kj}^{lm}+\varphi_{lk}+\phi_{jk}^{ml}\right)}\right]/D_{pj}\\
    &+\sum\nolimits_{r=1}^q\left[Y_{jr}^{mm}U_{mj}U_{mr}\cos{\left(\delta_{rj}^{mm}+\phi_{jr}^{mm}\right)}\right]/D_{pj}
  \end{aligned}
\end{equation}

\begin{equation}\label{equ_Relative_MM}
  \begin{aligned}
    \dot\delta_{ij}^{mm}&=-Y_{ij}^{mm}U_{mi}U_{mj}\cos{\left(\delta_{ij}^{mm}-\phi_{ij}^{mm}\right)}/D_{pi}\\
    &+Y_{ji}^{mm}U_{mi}U_{mj}\cos{\left(\delta_{ij}^{mm}+\phi_{ji}^{mm}\right)}/D_{pj}\\
    &-\sum\nolimits_{k=1}^p\left[G_{ik}^{ml}U_{mi}I_{lk}\cos{\left(\delta_{ki}^{lm}+\varphi_{lk}+\phi_{ik}^{ml}\right)}\right]/D_{pi}\\
    &-\sum\nolimits_{r=1,r\neq j}^q\left[Y_{ir}^{mm}U_{mi}U_{mr}\cos{\left(\delta_{ri}^{mm}+\phi_{ir}^{mm}\right)}\right]/D_{pi}\\
    &+\sum\nolimits_{k=1}^p\left[G_{jk}^{ml}U_{mj}I_{lk}\cos{\left(\delta_{kj}^{lm}+\varphi_{lk}+\phi_{jk}^{ml}\right)}\right]/D_{pj}\\
    &+\sum\nolimits_{r=1,r\neq i}^q\left[Y_{jr}^{mm}U_{mj}U_{mr}\cos{\left(\delta_{rj}^{mm}+\phi_{jr}^{mm}\right)}\right]/D_{pj}\\    
    &+P_{0mi}/D_{pi}-P_{0mj}/D_{pj}
  \end{aligned}
\end{equation}

\vspace{-5pt}
\subsection{Large-signal Modeling of the Multi-converter System}\label{section_2c}
The self-damping effects of each type of relative motion are represented by the first two terms in \eqref{equ_Relative_LL} to \eqref{equ_Relative_MM}. 
{\color{black}
In practical power networks, transmission lines are predominantly inductive, and thus the circuit angles satisfy $\phi_{ij}^{ll} = \phi_{ji}^{ll} \approx \frac{\pi}{2}, \ 
\phi_{ij}^{lm} = \phi_{ji}^{ml} \approx \pi, \ 
\phi_{ij}^{mm} = \phi_{ji}^{mm} \approx \frac{\pi}{2}$. Moreover, according to the symmetry of the admittance matrix $\mathbf{Y}_{net}$, the magnitudes of the corresponding coupling coefficients are equal, i.e., $|Z_{ij}^{ll}| = |Z_{ji}^{ll}|$, $|G_{ij}^{lm}| = |G_{ji}^{ml}|$, $|Y_{ij}^{mm}| = |Y_{ji}^{mm}|$. As a result, under typical operating conditions where GFL converters inject predominantly active current into the network for efficient power delivery, 
the self-damping terms can be approximated as 
\begin{equation}\label{equ_SelfDamp_Approx}
\begin{aligned}
f_{ij}^{ll}(\delta_{ij}^{ll})&\approx Z^{ll}_{ij}\cos\delta_{ij}^{ll}(K_{pi}I_{lj}-K_{pj}I_{li})\\
f_{ij}^{lm}(\delta_{ij}^{lm})&\approx -K_{pi}G_{ij}^{lm}U_{mj}\sin\delta_{ij}^{lm}
    				   -\frac{G_{ij}^{lm}U_{mj}I_{li}\cos\delta_{ij}^{lm}}{D_{pj}}\\
f_{ij}^{mm}(\delta_{ij}^{mm})&\approx -Y_{ij}^{mm}U_{mi}U_{mj}\sin{\delta_{ij}^{mm}}(\frac{1}{D_{pi}}+\frac{1}{D_{pj}})
\end{aligned}
\end{equation}
where $f_{ij}^{ll}$, $f_{ij}^{lm}$ and $f_{ij}^{mm}$ represent the self-damping terms corresponding to GFL-GFL, GFL-GFM and GFM-GFM relative angle motions, respectively.
}

{\color{black}
According to \eqref{equ_SelfDamp_Approx}, the two self-damping terms in GFL-GFL relative motion have opposite signs, thereby producing only a limited self-damping effect for the relative motions described in \eqref{equ_Relative_LL}. 
In contrast, for the relative motion between GFM converters, the two self-damping terms act in the same direction, which indicates that the PSC algorithms of both GFM converters help them reach a common speed, leading to a strong self-damping effect for the relative motions in \eqref{equ_Relative_MM}.
Moreover, for the relative motion between GFL and GFM converters in \eqref{equ_Relative_LM}, the self-damping terms generated by the PLL and PSC algorithms are approximately in quadrature and therefore do not cancel each other. Hence, the relative motions in \eqref{equ_Relative_LM} exhibit a stronger self-damping effect compared to those in \eqref{equ_Relative_LL}.
}

{\color{black}
Based on the above analysis of the self-damping effects in different types of relative angle motions, a subset of GFL-GFM and GFM-GFM relative motions is adopted to characterize the large-signal behavior of the entire system. 
To account for the angle dynamics of all converters, for a system with $n$ converters consisting of $p$ GFL converters and $q$ GFM converters, $p$ groups of GFL-GFM relative angle variables are selected to include all GFL converters, and $(q-1)$ groups of GFM-GFM relative angle variables are selected to connect all GFM converters without redundancy.

Among them, $p$ groups of relative motions are selected between each GFL converter and its nearest GFM converter. The nearest GFM converter is defined as the one with the "maximum" self-damping amplitude $A_{ij}^{lm}$ among all GFM converters, where $A_{ij}^{lm}$ is defined by
\begin{equation}\label{equ_Alm}
  \begin{aligned}
    A_{ij}^{lm}&=G^{lm}_{ij}U_{mj}\left[\left(K_{pi}\cos\phi_{ij}^{lm}-I_{li}\sin{\left(\varphi_{li}+\phi_{ij}^{lm}\right)}/D_{pj}\right)^2\right.\\[-2pt]
    &\left.+\left(-K_{pi}\sin\phi_{ij}^{lm}+I_{li}\cos{\left(\varphi_{li}+\phi_{ij}^{lm}\right)}/D_{pj}\right)^2\right]^{\frac{1}{2}}.
  \end{aligned}
\end{equation}

Moreover, a weighted graph is constructed using the GFM converters as graph nodes, where the edge weights are determined by the corresponding self-damping amplitudes, which can be calculated through
\begin{equation}\label{equ_Amm}
  \begin{aligned}
    A_{ij}^{mm}&=Y_{ij}^{mm}U_{mi}U_{mj}\left[\left(\cos\phi_{ij}^{mm}\right)^2\left(1/D_{pi}-1/D_{pj}\right)^2\right.\\[-2pt]
    &+\left.\left(\sin\phi_{ij}^{mm}\right)^2\left(1/D_{pi}+1/D_{pj}\right)^2\right]^{\frac{1}{2}}.
  \end{aligned}
\end{equation}
Based on this, the selected $(q-1)$ GFM-GFM relative motions are determined according to the maximum spanning tree of the weighted graph, such that all GFM converters are connected without redundancy while preserving relatively strong self-damping capability.

\begin{lemma}
Consider a system with $n=p+q$ converters. Let the selected set of relative angle variables consist of:

1) $p$ GFL-GFM relative motions, each connecting one GFL converter to one GFM converter;

2) $(q-1)$ GFM-GFM relative motions selected according to the maximum spanning tree of a weighted graph, where the $q$ GFM converters are treated as graph nodes and the corresponding self-damping amplitudes $A_{ij}^{mm}$ are adopted as the edge weights.

Then, there exists an invertible transformation matrix $\mathbf{T}_r$ that maps the original system angle states to the selected set of relative angle variables.
\end{lemma}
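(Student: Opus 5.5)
The plan is to make the statement precise and then reduce it to a standard fact about spanning trees. Since all relative angles are invariant under a common rotation $\theta_{li}\mapsto\theta_{li}+c$, $\theta_{mi}\mapsto\theta_{mi}+c$, no map from the $n$ absolute angles to only $n-1$ relative angles can be invertible. I will therefore augment the selected relative variables with one reference angle, chosen as a GFM angle, say $\theta_{m1}$. I then define $\mathbf{T}_r\in\mathbb{R}^{n\times n}$ by
\begin{equation*}
\begin{bmatrix}\theta_{m1}\\ \bm{\delta}\end{bmatrix}=\mathbf{T}_r\begin{bmatrix}\bm{\theta}_l\\ \bm{\theta}_m\end{bmatrix},
\end{equation*}
where $\bm{\delta}$ collects the $p$ selected $\delta_{ij}^{lm}$ and the $q-1$ selected $\delta_{ij}^{mm}$. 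The claim to prove is that this $\mathbf{T}_r$ is nonsingular. Equivalently, the selected relative angles determine the system angle states up to the common rotation, which is exactly what the subsequent $(n-1)$-order model needs.

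\textbf{Step 1: the selected relative angles form a spanning tree.} Consider the graph $\mathcal{G}$ whose $n$ nodes are all converters and whose edges are the selected relative motions. Its properties are as follows.
\begin{itemize}
\item By item 2), the $q-1$ GFM-GFM edges form a maximum spanning tree of the GFM graph. They therefore connect all $q$ GFM nodes without a cycle.
\item By item 1), each GFL node is attached by exactly one edge to some GFM node, namely its nearest GFM converter defined through $A_{ij}^{lm}$ in \eqref{equ_Alm}. Each GFL node is thus a leaf hanging off this tree.
\item Consequently, $\mathcal{G}$ is connected and has exactly $p+q-1=n-1$ edges, so it is a spanning tree of all $n$ converters.
\end{itemize}
The edge weights $A_{ij}^{mm}$ from \eqref{equ_Amm} play no role in invertibility; only the tree property of the chosen edges matters.

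\textbf{Step 2: identify $\mathbf{T}_r$ with a reduced incidence matrix.} Orient each edge according to the definitions $\delta^{lm}_{ij}=\theta_{li}-\theta_{mj}$ and $\delta^{mm}_{ij}=\theta_{mi}-\theta_{mj}$. The lower $n-1$ rows of $\mathbf{T}_r$ then form the transposed incidence matrix $\mathbf{B}^{\mathrm T}$ of $\mathcal{G}$: each row has a single $+1$ and a single $-1$. The first row is the unit vector selecting $\theta_{m1}$.

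\textbf{Step 3: prove nonsingularity by back-substitution.} I will show directly that $\mathbf{T}_r\bm{\theta}=0$ implies $\bm{\theta}=0$.
\begin{itemize}
\item The first row gives $\theta_{m1}=0$.
\item Root the tree at $m1$ and process nodes in breadth-first order. Each non-root node $v$ has a unique edge to its parent, and that edge's row reads $\theta_v-\theta_{\text{parent}}=0$ up to sign. Hence $\theta_v=\theta_{\text{parent}}=0$.
\item Connectivity of $\mathcal{G}$ guarantees that every node is reached, so the kernel is trivial.
\end{itemize}
This argument also gives $\mathbf{T}_r^{-1}$ constructively. Each absolute angle equals $\theta_{m1}$ plus a signed sum of the selected relative angles along the unique tree path from $m1$, so $\mathbf{T}_r^{-1}$ has entries in $\{0,\pm1\}$.

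As an alternative route, I could invoke the classical fact that deleting one row of the incidence matrix of a connected tree gives a nonsingular, in fact unimodular, $(n-1)\times(n-1)$ matrix. Expanding $\det\mathbf{T}_r$ along its first row then yields $\pm1$.

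The only real obstacle is the precise formulation. The phrase ``maps the original system angle states to the selected set'' must be read modulo the rotational invariance, either through the augmented reference angle or by working on the quotient space $\mathbb{R}^n/\mathrm{span}\{\mathbf{1}\}$. I will state this convention explicitly at the outset. Once that is fixed, Step 1 is the substantive point: it uses that item 1) adds exactly one edge per GFL converter, and that a spanning tree of the GFM nodes is acyclic. Steps 2 and 3 are then routine.
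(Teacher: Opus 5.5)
Your proof is correct and is essentially the paper's argument. The paper likewise removes the rotational invariance by fixing a GFM reference, though it works in the $(n-1)$-dimensional reference-based coordinates $\bm{\eta}$ relative to $\theta_{mq}$ rather than appending an absolute angle. It then obtains $\det(\mathbf{T}_r)=\det(\mathbf{I}_p)\det(\mathbf{B})\neq 0$ from a block upper-triangular structure: the identity block is the algebraic form of your ``GFL nodes are leaves'' observation, and $\mathbf{B}$ is the reduced incidence matrix of the GFM spanning tree. Your back-substitution proves, rather than cites, that this reduced incidence matrix is nonsingular (indeed unimodular), and it gives the explicit inverse with entries in $\{0,\pm1\}$ as a bonus.
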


\begin{proof}
Select the $q$th GFM converter as the reference converter and define the conventional reference-based relative angle deviation vector as\cite{chiangDirectMethodsStability2011}
\begin{equation}
\bm{\eta}=
\begin{bmatrix}
\Delta\theta_{l1}-\Delta\theta_{mq}\\[-2pt]
\vdots\\[-2pt]
\Delta\theta_{lp}-\Delta\theta_{mq}\\[-2pt]
\Delta\theta_{m1}-\Delta\theta_{mq}\\[-2pt]
\vdots\\[-2pt]
\Delta\theta_{m(q-1)}-\Delta\theta_{mq}
\end{bmatrix}
\in\mathbb{R}^{n-1}
\end{equation}
where $\Delta\theta_{(\cdot)}$ denotes the deviation of the corresponding converter angle from its steady-state value. The selected relative-angle vector is defined as
\begin{equation}
\mathbf{x}_{\delta}
=
[x_{\delta1},x_{\delta2},\ldots,x_{\delta(n-1)}]^T
=
\mathbf{T}_r\bm{\eta}
\end{equation}
where $\mathbf{T}_r\in\mathbb{R}^{(n-1)\times(n-1)}$ is the corresponding transformation matrix.
According to the proposed grouping strategy, $\mathbf{T}_r$ can be described in the following block form
\begin{equation}
\mathbf{T}_r=
\begin{bmatrix}
\mathbf{I}_{p} & \mathbf{H}_{p\times(q-1)}\\[-2pt]
\mathbf{0}_{(q-1)\times p} & \mathbf{B}_{(q-1)\times(q-1)}
\end{bmatrix}.
\end{equation}
The first $p$ selected relative angle variables in $\mathbf{x}_{\delta}$ correspond to the $p$ GFL converters in sequence, where each selected relative angle contains only one GFL angle deviation. Therefore, the upper-left block of $\mathbf{T}_r$ is an identity matrix $\mathbf{I}_{p}$, while the upper-right block $\mathbf{H}$ describes the associated GFM converter in each selected GFL-GFM relative angle motion.
For the selected GFM-GFM relative angle motions, since no GFL angle is included, the lower-left block is a zero matrix $\mathbf{0}_{(q-1)\times p}$.
Furthermore, according to the proposed grouping strategy, the selected $(q-1)$ motions are determined from the maximum spanning tree of the weighted graph. Therefore, the lower-right block $\mathbf{B}$ corresponds to the reduced incidence matrix of a tree, which has  full rank $q-1$ and satisfies $\det(\mathbf{B})\neq0$.
Therefore, $\mathbf{T}_r$ forms a block upper triangular structure, and its determinant satisfies
\begin{equation}
\det(\mathbf{T}_r)
=
\det(\mathbf{I}_{p})\det(\mathbf{B})\neq0.
\end{equation}
Thus, $\mathbf{T}_r$ is invertible, which implies that the selected relative angle set is equivalent to the conventional reference-based relative angle coordinates.
\end{proof}
}

Consequently, the large-signal dynamics of a multi-converter system comprising $p$ GFL converters and $q$ GFM converters can be clearly and completely depicted by $p$ GFL-GFM relative motions and $(q-1)$ GFM-GFM relative motions.


\section{Small-Gain-Like Framework for Large-signal Stability Evaluation}\label{section_3}
In this section, the small-gain-like property of interconnected systems is first established based on the individual dissipation capabilities of subsystems and their connections, through which the Lyapunov stability of the overall system can be characterized. 
It is then extended to multi-converter systems, where the dissipation capabilities of relative angle motions are analyzed and the small-gain-like property is verified within certain angle limits. 
Finally, an ellipsoidal forward-invariant region is identified inside the angle-limit region, which provides a straightforward estimate of the large-signal stability region for multi-converter systems.

\vspace{-5pt}
\subsection{Small-gain-like Property of Interconnected Systems}\label{section_3a}
Consider a system $\Sigma$ composed of $n$ subsystems $\Sigma_i\ (i=1,\dots,n)$ with the state vector $\mathbf{x}=[x_1,\dots,x_n]^{\mathrm{T}}$, $ x_i\in \mathbb{R}^{n_i}$. Each subsystem receives $(n-1)$ inputs from the others, and the input from $\Sigma_j$ to $\Sigma_i$ is denoted by $u_{ij}\in\mathbb{R}^{n_{ij}^u}$. Accordingly, the dynamics of $\Sigma_i$ are represented as
\begin{equation}
  \Sigma_i : \dot x_i=f_i(x_i,u_{ij})\quad(j=1,\dots,n,\ j\neq i)
\end{equation}
where $f_i: \mathbb{R}^{n_i+n_{ui}} \to \mathbb{R}^{n_i}\ (n_{ui}=\sum\nolimits_{j=1, j\neq i}^n{n_{ij}^u})$.

{\color{black}
Moreover, assume that each subsystem satisfies the dissipation inequality
\begin{equation}\label{equ_Subsystem_dissipation}
  \dot V_i \leq -a_i\|x_i\|_2^2+\sum\nolimits_{j=1,j\neq i}^n{b_{ij}\|u_{ij}\|_2^2}
\end{equation}
with respect to a positive definite and radially unbounded Lyapunov function $V_i: \mathbb{R}^{n_i} \to \mathbb{R}_+ \quad \text{of class } \mathcal{C}^1$, where $a_i>0$ and $b_{ij}\geq 0$. 
This inequality characterizes the stability of the corresponding subsystem in Lyapunov-function form, where the derivative $\dot V_i$ consists of two parts: an intrinsic dissipation term representing the local damping capability, and a supply-rate term representing the influence of interconnected inputs from other subsystems.
Accordingly, the dissipation inequality in \eqref{equ_Subsystem_dissipation} captures an input-to-state stability-like property of the subsystem $\Sigma_i$\cite{jiangSmallgainTheoremISS1994,hillStabilityNonlinearDissipative1976}.
}

The subsystems are interconnected via $u_{ij}=x_j$. Under this interconnection, the dissipation inequalities are summarized as
\begin{equation}\label{equ_Subsystem_dissipation_matrix}
  \dot{\bar{V}}\leq -\mathbf{E}\boldsymbol{\phi}(\mathbf{x})
\end{equation}
where $\bar{V}\triangleq[V_1,\dots,V_n]^{\mathrm{T}}$ and $\boldsymbol{\phi}(\mathbf{x})\triangleq[\|x_1\|_2^2,\dots,\|x_n\|_2^2]^{\mathrm{T}}$. The matrix $\mathbf{E}$, termed the \textit{dissipation matrix}, reflects the dissipation capabilities of subsystems and their connections\cite{chenActiveNodesNetwork2024}, is represented as
\begin{align}\label{equ_Ematrix_general}
    \mathbf{E}=\diag(a_1,\dots,a_n)-\mathbf{B}
    \text{, where }
    \left(\mathbf{B}\right)_{ij}=
    \left\{
        \begin{aligned}
          &0\ \ \ i=j\\[-2pt]
          &b_{ij}\  i\neq j
        \end{aligned}
    \right. .
\end{align}

To evaluate the stability of the interconnected system $\Sigma$, it is expected to find the dissipation inequality for the whole system. 
Assume that a Lyapunov function of the whole system $V$ exists and it is constructed as
\begin{equation}\label{equ_energy_like_fcn}
  V=\sum\nolimits_{i=1}^n{c_iV_i}=\mathbf{c}^{\mathrm{T}}\bar{V}\text{, where } \mathbf{c}\triangleq[c_1,\dots,c_n]^{\mathrm{T}}>0.
\end{equation}
If the dissipation inequality
\begin{equation}\label{equ_WholeSystem_dissipation}
  \dot V\leq -W(\mathbf{x})\leq 0
\end{equation} 
is satisfied, where $W(\cdot)$ is a positive definite function of $\mathbf{x}$, then Lyapunov stability of the system $\Sigma$ can be concluded. 
{\color{black}
Nevertheless, it is worth noting that the construction of such a Lyapunov function depends on both the dissipation properties and the interconnection structure of the system, and is not guaranteed for arbitrary interconnected systems. For clarity, we define the \textit{small-gain-like property} of an interconnected system as follows.

\begin{definition}[Small-gain-like property]
An interconnected system $\Sigma$ is said to possess the small-gain-like property if there exists a coefficient vector $\mathbf{c}$ with all positive elements such that a sum-type Lyapunov function $V$ of the form \eqref{equ_energy_like_fcn} exists and satisfies the dissipation inequality \eqref{equ_WholeSystem_dissipation}.
\end{definition}

Accordingly, we refer to the \textit{small-gain-like condition} as a condition on the network parameters in the dissipation matrix $\mathbf{E}$ under which the small-gain-like property holds. The following theorem establishes a sufficient condition for the small-gain-like property of interconnected systems.

\begin{theorem}\label{thm_smallgain}
Consider the interconnected system $\Sigma$ with its dissipation matrix $\mathbf{E}$ defined in \eqref{equ_Ematrix_general}. If matrix $\mathbf{E}$ is a nonsingular \textit{M-matrix}, then the system $\Sigma$ possesses the small-gain-like property. 
\end{theorem}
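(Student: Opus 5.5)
The plan is to use a standard characterization of nonsingular M-matrices to produce the weight vector $\mathbf{c}$ directly. Since $\mathbf{E}=\diag(a_1,\dots,a_n)-\mathbf{B}$ has nonpositive off-diagonal entries (because $b_{ij}\geq 0$), it is a Z-matrix. For Z-matrices, being a nonsingular M-matrix is equivalent to $\mathbf{E}^{\mathrm{T}}$ being a nonsingular M-matrix, which in turn is equivalent to the existence of a vector $\mathbf{c}>0$ with $\mathbf{E}^{\mathrm{T}}\mathbf{c}>0$ componentwise. I will take this as known from M-matrix theory. If a self-contained argument is wanted, one can use $(\mathbf{E}^{\mathrm{T}})^{-1}\geq 0$ entrywise together with its having no zero row, and set $\mathbf{c}=(\mathbf{E}^{\mathrm{T}})^{-1}\mathbf{1}$. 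Then $\mathbf{c}\geq0$, $\mathbf{E}^{\mathrm{T}}\mathbf{c}=\mathbf{1}>0$, and $\mathbf{c}>0$ because every row of a nonsingular nonnegative inverse is nonzero.

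With this $\mathbf{c}$, define $V=\mathbf{c}^{\mathrm{T}}\bar V$ as in \eqref{equ_energy_like_fcn}. Since each $c_i>0$ and each $V_i$ is $\mathcal{C}^1$, positive definite and radially unbounded in $x_i$, the sum $V$ is $\mathcal{C}^1$, positive definite and radially unbounded in $\mathbf{x}$. To see this, use $V\geq \min_i c_i\,\sum_i V_i$ and note that $\|\mathbf{x}\|\to\infty$ forces some $\|x_i\|\to\infty$.

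Next, differentiate along trajectories. Substituting $u_{ij}=x_j$ gives \eqref{equ_Subsystem_dissipation_matrix}. Because $\mathbf{c}>0$, multiplying the componentwise inequality preserves its direction, so
\begin{equation*}
\dot V=\mathbf{c}^{\mathrm{T}}\dot{\bar V}\leq-\mathbf{c}^{\mathrm{T}}\mathbf{E}\boldsymbol{\phi}(\mathbf{x})=-(\mathbf{E}^{\mathrm{T}}\mathbf{c})^{\mathrm{T}}\boldsymbol{\phi}(\mathbf{x}).
\end{equation*}
Set $\mathbf{d}=\mathbf{E}^{\mathrm{T}}\mathbf{c}>0$ and $W(\mathbf{x})=\sum_i d_i\|x_i\|_2^2$. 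This $W$ is positive definite since each $d_i>0$, and it is bounded below by $\min_i d_i\,\|\mathbf{x}\|_2^2$. Therefore \eqref{equ_WholeSystem_dissipation} holds, and the system has the small-gain-like property in the sense of the Definition.

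The main obstacle is the first step: justifying that a positive vector $\mathbf{c}$ with $\mathbf{c}^{\mathrm{T}}\mathbf{E}>0$ exists. The remaining steps are routine once $\mathbf{c}$ is in hand. I would handle this step by citing the standard equivalent conditions for nonsingular M-matrices (e.g., Berman–Plemmons). The only care needed is to apply them to the transpose, which is legitimate because $\mathbf{E}$ and $\mathbf{E}^{\mathrm{T}}$ are simultaneously nonsingular M-matrices: they have the same eigenvalues and the same Z-sign pattern.
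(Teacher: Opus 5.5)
Your proof is correct and follows the paper's argument: the paper takes $\mathbf{c}=(\boldsymbol{\sigma}^{\mathrm{T}}\mathbf{E}^{-1})^{\mathrm{T}}$, so that $\mathbf{c}^{\mathrm{T}}\mathbf{E}=\boldsymbol{\sigma}^{\mathrm{T}}$ and $\dot V\leq-\sum_i\sigma_i\|x_i\|_2^2$, which is your choice $\mathbf{c}=(\mathbf{E}^{\mathrm{T}})^{-1}\mathbf{1}$ with $\mathbf{1}$ replaced by a general $\boldsymbol{\sigma}>0$ (the paper later uses this freedom to maximize the ellipsoid volume). Your justification that $\mathbf{c}>0$ via the nonnegative, nonsingular inverse, and your care in applying the semipositivity criterion to $\mathbf{E}^{\mathrm{T}}$ rather than $\mathbf{E}$, tighten two points the paper glosses over: it cites $\mathbf{E}\mathbf{v}>0$ and writes $\boldsymbol{\sigma}\geq 0$ where strict positivity is needed.
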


\begin{proof}
The following proof has been previously established in \cite{chenActiveNodesNetwork2024}, and is reformulated here for completeness.
Substituting the dissipation inequality in \eqref{equ_Subsystem_dissipation_matrix} and the detailed expression of $V$ in \eqref{equ_energy_like_fcn} into $\dot{V}$ in \eqref{equ_WholeSystem_dissipation} yields
\begin{equation}\label{equ_WholeSystem_dissipation_v2}
  \dot V=\mathbf{c}^{\mathrm{T}}\dot{\bar{V}}\leq-\mathbf{c}^{\mathrm{T}} \mathbf{E}\boldsymbol{\phi}(\mathbf{x})\leq 0.
\end{equation}
Since all off-diagonal entries of $\mathbf{E}$ are nonpositive, $\mathbf{E}$ is a \textit{Z-matrix} \cite{plemmonsMmatrixCharacterizationsInonsingularMmatrices1977a}. 
If $\mathbf{E}$ is further a nonsingular \textit{M-matrix}, then according to the properties of nonsingular \textit{M-matrices}\cite{plemmonsMmatrixCharacterizationsInonsingularMmatrices1977a}, 
there exists a positive vector $\mathbf{v}$ such that all elements in the vector $\mathbf{E}\mathbf{v}$ are positive. 

Hence, the coefficient vector can be constructed as
$
\mathbf{c}=(\boldsymbol{\sigma}^{\mathrm{T}}\mathbf{E}^{-1})^{\mathrm{T}}
$
with a positive vector $\boldsymbol{\sigma}\triangleq[\sigma_1,\dots,\sigma_n]^{\mathrm{T}}\geq 0$, and thus the inequality
$
\dot V\leq-\sum\nolimits_{i=1}^n\sigma_i\|x_i\|_2^2\leq 0
$ holds.
Therefore, the dissipation inequality in \eqref{equ_WholeSystem_dissipation} is satisfied, and the small-gain-like property holds.
\end{proof}

\begin{remark}
For the special case of a single-loop connection, the small-gain-like condition reduces to the criterion in \eqref{equ_singleloop_criterion}, as
\begin{equation}\label{equ_singleloop_criterion}
  a_{a1}>0,\quad a_{a2}>0,\quad\frac{b_{a12}}{a_{a1}}\frac{b_{a21}}{a_{a2}}<1.
\end{equation}
This condition is consistent with the loop-gain-based small-gain stability criterion for single-loop feedback systems \cite{khalilNonlinearSystemEdition2002}.
\end{remark}
}

In this way, the Lyapunov stability of the interconnected system can be characterized by the Lyapunov function in \eqref{equ_energy_like_fcn}, which satisfies the dissipation inequality in \eqref{equ_WholeSystem_dissipation} under the established small-gain-like property, thereby allowing large-signal stability evaluation of multi-converter systems. 

\vspace{-5pt}
\subsection{Dissipation Capability Analysis for Multi-converter Systems}\label{section_3b}
In this subsection, the dissipation capabilities of the selected $p$ GFL-GFM and $(q-1)$ GFM-GFM relative motions in Section~\ref{section_2c} are first analyzed and quantified. On this basis, a dissipation matrix is constructed for angle motions of the overall multi-converter system.

The equilibrium of the selected relative angle motions is first moved to zero by defining
\begin{equation}
  \Delta\delta_{ij}^{ll}=\delta_{ij}^{ll}-\delta_{ij}^{lle}\ 
  \Delta\delta_{ij}^{lm}=\delta_{ij}^{lm}-\delta_{ij}^{lme}\ 
  \Delta\delta_{ij}^{mm}=\delta_{ij}^{mm}-\delta_{ij}^{mme}
\end{equation}
where the superscript $e$ denotes the steady-state value of the corresponding relative angle. Accordingly, the selected GFL-GFM relative motions are mathematically described as

\begin{equation}
  \begin{aligned}\label{equ_RelativeLM_Active}
    &\Delta\dot\delta_{ij}^{lm}=A_{ij}^{lm}\Delta_{sin}(\Delta\delta_{ij}^{lm},\delta_{ij}^{lme}+\varphi_{ij}^{lm})\\
    &+\sum\nolimits_{k=1,k\neq i}^p\left[K_{pi}Z_{ik}^{ll}I_{lk}\Delta_{sin}(\Delta\delta_{ki}^{ll},\delta_{ki}^{lle}+\varphi_{lk}+\phi_{ik}^{ll})\right]\\
    &+\sum\nolimits_{r=1,r\neq j}^q\left[K_{pi}G_{ir}^{lm}U_{mr}\Delta_{sin}(\Delta\delta_{ir}^{lm},\delta_{ir}^{lme}-\phi_{ir}^{lm})\right]\\
    &+\sum\nolimits_{k=1,k\neq i}^{p}\left[\frac{G_{jk}^{ml}U_{mj}I_{lk}}{D_{pj}}\Delta_{cos}(\Delta\delta_{kj}^{lm},\delta_{kj}^{lme}+\varphi_{lk}+\phi_{jk}^{ml})\right]\\
    &+\sum\nolimits_{r=1,r\neq j}^q\left[\frac{Y_{jr}^{mm}U_{mj}U_{mr}}{D_{pj}}\Delta_{cos}(\Delta\delta_{rj}^{mm},\delta_{rj}^{mme}+\phi_{jr}^{mm})\right]
  \end{aligned}
\end{equation}
where $A_{ij}^{lm}$ is defined in \eqref{equ_Alm}, and $\varphi_{ij}^{lm}$ is determined with
\begin{equation}\label{equ_philm}
    \begin{aligned}
        &\sin{\varphi_{ij}^{lm}}=\frac{G^{lm}_{ij}U_{mj}\left(-K_{pi}\sin{\phi_{ij}^{lm}}+\frac{I_{li}}{D_{pj}}\cos{\left(\varphi_{li}+\phi_{ij}^{lm}\right)}\right)}{A_{ij}^{lm}}\\
        &\cos{\varphi_{ij}^{lm}}=\frac{G^{lm}_{ij}U_{mj}\left(K_{pi}\cos{\phi_{ij}^{lm}}-\frac{I_{li}}{D_{pj}}\sin{\left(\varphi_{li}+\phi_{ij}^{lm}\right)}\right)}{A_{ij}^{lm}}
  \end{aligned}
\end{equation}
and the functions $\Delta_{sin}(x,a)$ and $\Delta_{cos}(x,a)$ are defined as
\begin{equation}
  \begin{aligned}
  \Delta_{sin}(x,a)&=\sin{\left(x+a\right)}-\sin{a}\\
  \Delta_{cos}(x,a)&=\cos{\left(x+a\right)}-\cos{a}.
  \end{aligned}
\end{equation}

The selected GFM-GFM relative angle motions are
\begin{equation}\label{equ_RelativeMM_Active}
  \begin{aligned}
    &\Delta\dot\delta_{ij}^{mm}=A_{ij}^{mm}\Delta_{sin}(\Delta\delta_{ij}^{mm},\delta_{ij}^{mme}+\varphi_{ij}^{mm})\\
    &-\sum\nolimits_{k=1}^p\left[\frac{G_{ik}^{ml}U_{mi}I_{lk}}{D_{pi}}\Delta_{cos}(\Delta\delta_{ki}^{lm},\delta_{ki}^{lme}+\varphi_{lk}+\phi_{ik}^{ml})\right]\\
    &-\sum\nolimits_{r=1,r\neq i,j}^q\left[\frac{Y_{ir}^{mm}U_{mi}U_{mr}}{D_{pi}}\Delta_{cos}(\Delta\delta_{ri}^{mm},\delta_{ri}^{mme}+\phi_{ir}^{mm})\right]\\
    &+\sum\nolimits_{k=1}^p\left[\frac{G_{jk}^{ml}U_{mj}I_{lk}}{D_{pj}}\Delta_{cos}(\Delta\delta_{kj}^{lm},\delta_{kj}^{lme}+\varphi_{lk}+\phi_{jk}^{ml})\right]\\
    &+\sum\nolimits_{r=1,r\neq i,j}^q\left[\frac{Y_{jr}^{mm}U_{mj}U_{mr}}{D_{pj}}\Delta_{cos}(\Delta\delta_{rj}^{mm},\delta_{rj}^{mme}+\phi_{jr}^{mm})\right]
  \end{aligned}
\end{equation}
where $A_{ij}^{mm}$ is defined in \eqref{equ_Amm}, and $\varphi_{ij}^{mm}$ is determined with
\begin{equation}\label{equ_phimm}
  \begin{aligned}
        &\sin{\varphi_{ij}^{mm}}=\frac{Y_{ij}^{mm}U_{mi}U_{mj}\left(-1/D_{pi}+1/D_{pj}\right)\cos\phi_{ij}^{mm}}{A_{ij}^{mm}}\\
        &\cos{\varphi_{ij}^{mm}}=\frac{-U_{mi}U_{mj}Y_{ij}^{mm}\left(1/D_{pi}+1/D_{pj}\right)\sin\phi_{ij}^{mm}}{A_{ij}^{mm}}.
  \end{aligned}
\end{equation}

The selected $n_{\delta}$ ($n_{\delta}=p+q-1$) relative angles are further represented with a state vector $\mathbf{x}_{\delta}\triangleq [x_{\delta 1},\dots,x_{\delta n_{\delta}}]^{\mathrm{T}}$, where $x_{\delta i}\ (i=1,\dots,p)$ are the $p$ GFL-GFM relative angles $\Delta\delta_{ij}^{lm}$, and $x_{\delta i}\ (i=p+1,\dots,n_{\delta})$ are the $(q-1)$ GFM-GFM relative angles $\Delta\delta_{ij}^{mm}$. In this manner, the angle dynamics of the whole system are depicted by $n_{\delta}$ subsystems $\Sigma_{\delta i}$, as
\begin{equation}
  \Sigma_{\delta i}: \dot x_{\delta i}=f_{\delta i}(\mathbf{x}_{\delta}) \quad(i=1,\dots,n_{\delta})
\end{equation}
and $f_{\delta i}$ can be explicitly expressed through \eqref{equ_RelativeLM_Active} and \eqref{equ_RelativeMM_Active}.
With Hadamard's lemma, the relative angle dynamics in the multi-converter system can be further described as
\begin{equation}
  \Sigma_{\delta i}: \dot x_i=\sum\nolimits_{j=1}^{n_{\delta}}\left[g_{ixj}(\mathbf{x}_{\delta})x_{\delta j}\right]    \quad(i=1,\dots,n_{\delta})
\end{equation}
where the coefficient functions $g_{ixj}$ are calculated through
\begin{equation}\label{equ_hadamard}
  g_{ixj}(\mathbf{x}_{\delta})=\int_{0}^{1}\frac{\partial f_{\delta i}}{\partial x_{\delta j}}(t\mathbf{x}_{\delta})dt \quad (i,j=1,\dots,n_{\delta}).
\end{equation}

{\color{black}
Utilizing the positive definite and radially unbounded function $V_{\delta i}=\frac{1}{2}x^2_{\delta i}$ as the Lyapunov function for the subsystem $\Sigma_{\delta i}$. For a prescribed relative-angle domain $\Omega_\delta$, the dissipation inequality of the subsystem can be described as
\begin{equation}\label{equ_disp_ineq_relative}
  \begin{aligned}
    \dot V_{\delta i}&=x_{\delta i}f_{\delta i}(\mathbf{x}_{\delta})=g_{ixi}(\mathbf{x}_{\delta})x_{\delta i}^2+\sum\nolimits_{j=1,j\neq i}^{n_{\delta}}\left[g_{ixj}(\mathbf{x}_{\delta})x_{\delta i}x_{\delta j}\right]\\
    &\leq \max_{\mathbf{x}_{\delta}\in\Omega_{\delta}}{\left(g_{ixi}(\mathbf{x}_{\delta})\right)}x_{\delta i}^2\\
    &\quad +\sum\nolimits_{j=1,j\neq i}^{n_{\delta}}\max_{\mathbf{x}_{\delta}\in\Omega_{\delta}}{\left|g_{ixj}(\mathbf{x}_{\delta})\right|}\left(\frac{x_{\delta i}^2}  {2\epsilon_{ij}^2}+\frac{\epsilon_{ij}^2x_{\delta j}^2}{2}\right)\\
    &\triangleq -a_{\delta i}x_{\delta i}^2+\sum\nolimits_{j=1,j\neq i}^{n_{\delta}} b_{\delta ij}x_{\delta j}^2
  \end{aligned}
\end{equation}
where the inequality is derived through Young's inequality and $\epsilon_{ij}>0$ are the corresponding Young's coefficients.
With the obtained dissipation inequalities of subsystems, the dissipation matrix for the multi-converter system $\Sigma_{\delta}$ is built, and it is represented as
\begin{align}\label{equ_Ematrix_Forsystem}
    &\mathbf{E}_{\delta}=\diag(a_{\delta 1},\dots,a_{\delta n_{\delta}})-\mathbf{B}_{\delta}, \  
    \left(\mathbf{B}_{\delta}\right)_{ij}=
    \left\{
        \begin{aligned}
          &0\ \ \ \ i=j\\[-2pt]
          &b_{\delta ij}\  i\neq j
        \end{aligned}
    \right.\\\nonumber
    &\left\{
      \begin{aligned}
        a_{\delta i}&=-\max_{\mathbf{x}_{\delta}\in\Omega_{\delta}}{\left(g_{ixi}(\mathbf{x}_{\delta})\right)}-\sum\nolimits_{j=1,j\neq i}^{n_{\delta}}\frac{\max_{\mathbf{x}_{\delta}\in\Omega_{\delta}}{\left|g_{ixj}(\mathbf{x}_{\delta})\right|}}{2\epsilon_{ij}^2}\\
        b_{\delta ij}&=\max_{\mathbf{x}_{\delta}\in\Omega_{\delta}}{\left|g_{ixj}(\mathbf{x}_{\delta})\right|}\epsilon_{ij}^2/2
      \end{aligned}
    \right. .
\end{align}
}

\vspace{-5pt}
\subsection{Large-signal Stability Evaluation with Small-gain-like Property}\label{section_3c}
With the dissipation matrix, the small-gain-like property of the multi-converter system is checked and further established within certain angle limits in this subsection. Based on this, a Lyapunov function $V_{\delta}$ for multi-converter systems is constructed. Furthermore, an ellipsoidal stability region is derived, which enables direct large-signal stability evaluation for multi-converter systems.

{\color{black}
As elaborated and proved in Theorem~\ref{thm_smallgain}, the small-gain-like property holds if the corresponding dissipation matrix is a nonsingular \textit{M-matrix}. 
For the dissipation matrix $\mathbf{E}_{\delta}$ of the angle dynamics in multi-converter systems, as shown in \eqref{equ_Ematrix_Forsystem}, the elements in $\mathbf{E}_{\delta}$ are maximum-value functions of the relative angle variables $\mathbf{x}_{\delta}$. As a result, the dissipation matrix varies along the transient angle trajectories, and the small-gain-like property of system $\Sigma_{\delta}$ cannot be directly verified over the entire state space. 
Therefore, to explicitly characterize the small-gain-like property of system $\Sigma_{\delta}$, the transient angle variables are assumed to remain within an angle-limit region $\mathcal{B}_x$ during transients, where
\begin{equation} 
  \mathcal{B}_{x}(\mathbf{x}_{\delta}^{lim}) \triangleq \left\{ \mathbf{x}_{\delta} \,\middle|\, -\mathbf{x}_{\delta}^{lim} \leq \mathbf{x}_{\delta} \leq \mathbf{x}_{\delta}^{lim} \right\}
\end{equation}
with $\mathbf{x}_{\delta}^{lim}\triangleq [x_{\delta1}^{lim},\dots,x_{\delta n_{\delta}}^{lim}]^\mathrm{T}>0$. 
Accordingly, for $\mathbf{x}_{\delta}\in \mathcal{B}_{x}$, 
the maximum values in \eqref{equ_Ematrix_Forsystem} can be explicitly calculated as
\begin{equation}
  \bar g_{ixi}\triangleq\max_{\mathbf{x}_{\delta}\in\mathcal{B}_x}\left(g_{ixi}(\mathbf{x}_{\delta})\right)\quad \bar g_{ixj}\triangleq\max_{\mathbf{x}_{\delta}\in\mathcal{B}_x}\left|g_{ixj}(\mathbf{x}_{\delta})\right|  
\end{equation}
and thus the elements in $\mathbf{E}_{\delta}$ are rewritten as
\begin{equation}
  a_{\delta i}=-\bar g_{ixi}-\sum_{j=1,j\neq i}^{n_{\delta}}\frac{\bar g_{ixj}}{2\epsilon_{ij}^2}\quad
b_{\delta ij}=\frac{\epsilon_{ij}^2}{2}\bar g_{ixj}.
\end{equation}
In this manner, the dissipation matrix $\mathbf{E}_{\delta}$ can be rewritten in a constant form within the prescribed angle-limit region $\mathcal{B}_x(\mathbf{x}_{\delta}^{lim})$. 
Based on this constant-form matrix, the small-gain-like property of the multi-converter system can be directly examined throughout the region $\mathcal{B}_x(\mathbf{x}_{\delta}^{lim})$. 
}

Consequently, a group of admissible angle-limit sets can be identified, within which the multi-converter system holds the small-gain-like property. For a specific angle-limit set $\mathbf{x}_{\delta}^{lim}$, a Lyapunov function can be constructed within the associated region $\mathcal{B}_{x}(\mathbf{x}_{\delta}^{lim})$, as $V_{\delta}(\mathbf{x}_{\delta})=\mathbf{c}_{\delta}^{\mathrm{T}}\bar{V}_{\delta}$, where 
\begin{equation}
  \bar{V}_{\delta}\triangleq [V_{\delta 1},\dots,V_{\delta n_{\delta}}]^\mathrm{T}=\frac{1}{2}[x_{\delta 1}^2,\dots,x_{\delta n_{\delta}}^2]^{\mathrm{T}}
\end{equation}
and the coefficient vector $\mathbf{c}_{\delta}\triangleq[{c}_{\delta 1},\dots,{c}_{\delta n_{\delta}}]^{\mathrm{T}}$ is given by $\mathbf{c}_{\delta}=(\boldsymbol{\sigma}_{\delta}^\mathrm{T}\mathbf{E}_{\delta}^{-1})^\mathrm{T}$ with arbitrary $\boldsymbol{\sigma}_{\delta}\triangleq[\sigma_{\delta 1},\dots,\sigma_{\delta n_{\delta}}]^{\mathrm{T}}>0$. 
As a result, the establishment of the small-gain-like property ensures that the multi-converter system is Lyapunov stable within the corresponding angle-limit region $\mathcal{B}_{x}$. 
\begin{figure}\centering
    \includegraphics[scale=0.55]{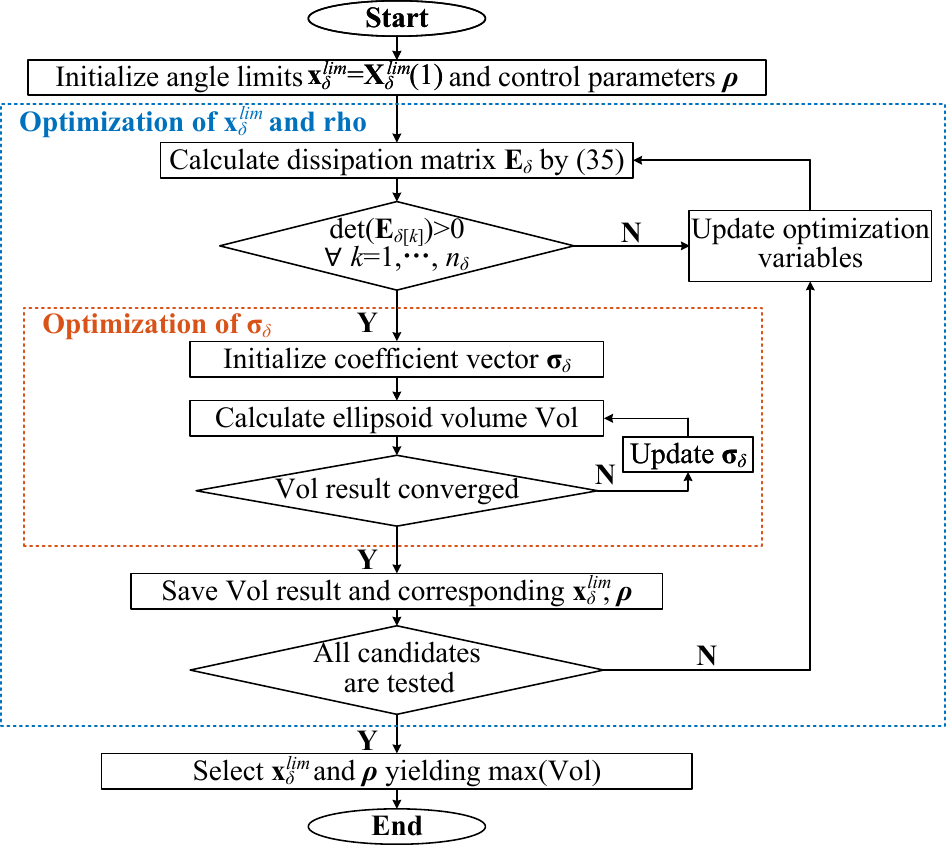}
    \vspace{-5pt}
    \caption{\textcolor{black}{Calculation process of $\mathbf{x}_{\delta}^{lim}$ for the maximum ellipsoid volume}}
    \label{Fig3_Flowchart}
    \vspace{-5pt}    
\end{figure}

{\color{black}
\begin{definition}[Lyapunov stability region]\label{Def_Bx}
Let $\mathbf{x}_{\delta}^{lim}\triangleq [x_{\delta1}^{lim},\dots,x_{\delta n_{\delta}}^{lim}]^\mathrm{T}>0$ be an admissible angle-limit vector such that the corresponding region $\mathcal{B}_{x}(\mathbf{x}_{\delta}^{lim})$ satisfies the small-gain-like property. The Lyapunov stability region is defined as the sublevel set of $V_{\delta}$ given by
\begin{equation}\label{equ_criterion_MM}
    \mathcal{B}_{V}(\mathbf{x}_{\delta}^{lim}) \triangleq \left\{ \mathbf{x}_{\delta} \,\middle|\, V_{\delta}(\mathbf{x}_{\delta}) \leq 
    \min_{\mathbf{x}_{\delta} \in \partial\mathcal{B}_{x}(\mathbf{x}_{\delta}^{lim})} V_{\delta}(\mathbf{x}_{\delta}) \right\}.
\end{equation}
\end{definition}
For notational convenience, define the boundary energy as 
\begin{equation}\label{equ_critical_engy}
    V_{\delta}^{cr}(\mathbf{x}_{\delta}^{lim}) \triangleq \min_{\mathbf{x}_{\delta} \in \partial\mathcal{B}_{x}(\mathbf{x}_{\delta}^{lim})} V_{\delta}(\mathbf{x}_{\delta}).
\end{equation}
\begin{proposition}[Forward invariance]
  For any admissible $\mathbf{x}_{\delta}^{lim}$ as defined in Definition~\ref{Def_Bx}, the set 
  $\mathcal{B}_{V}(\mathbf{x}_{\delta}^{lim})$ is forward-invariant.
\end{proposition}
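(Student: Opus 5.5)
We argue in two steps: first that the sublevel set $\mathcal{B}_{V}(\mathbf{x}_{\delta}^{lim})$ lies inside the angle-limit box $\mathcal{B}_{x}(\mathbf{x}_{\delta}^{lim})$, and then that $V_{\delta}$ cannot increase along trajectories confined to that box. A standard first-exit-time argument then shows trajectories never leave $\mathcal{B}_{V}$.

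\emph{Containment.} Since $\mathbf{c}_{\delta}=(\boldsymbol{\sigma}_{\delta}^{\mathrm{T}}\mathbf{E}_{\delta}^{-1})^{\mathrm{T}}$ and $\mathbf{E}_{\delta}$ is a nonsingular M-matrix, $\mathbf{E}_{\delta}^{-1}\geq 0$ entrywise with positive diagonal. With $\boldsymbol{\sigma}_{\delta}>0$ this gives $\mathbf{c}_{\delta}>0$. Hence $V_{\delta}=\tfrac12\sum_i c_{\delta i}x_{\delta i}^2$ is a positive definite quadratic form whose sublevel sets are origin-centred ellipsoids, and which is strictly increasing along every ray from the origin.

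Suppose some $\mathbf{x}_{\delta}\in\mathcal{B}_{V}$ lay outside $\mathcal{B}_{x}$. The segment from $0$ to $\mathbf{x}_{\delta}$ is connected, so it crosses $\partial\mathcal{B}_{x}$ at a point $t^\ast\mathbf{x}_{\delta}$ with $t^\ast<1$. Strict monotonicity along the ray then gives
\[
V_{\delta}(\mathbf{x}_{\delta})>V_{\delta}(t^\ast\mathbf{x}_{\delta})\geq V_{\delta}^{cr}(\mathbf{x}_{\delta}^{lim}),
\]
a contradiction. Therefore $\mathcal{B}_{V}\subset\mathcal{B}_{x}$. Moreover, the only points of $\mathcal{B}_{V}$ that can touch $\partial\mathcal{B}_{x}$ are those with $V_{\delta}=V_{\delta}^{cr}$.

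\emph{Decrease inside the box.} For $\mathbf{x}_{\delta}\in\mathcal{B}_{x}$, the maxima defining $a_{\delta i}$ and $b_{\delta ij}$ are taken over $\mathcal{B}_{x}$. The subsystem inequalities \eqref{equ_disp_ineq_relative} therefore hold pointwise there, with the constant matrix $\mathbf{E}_{\delta}$. Following the proof of Theorem~\ref{thm_smallgain},
\[
\dot V_{\delta}=\mathbf{c}_{\delta}^{\mathrm{T}}\dot{\bar V}_{\delta}\leq -\tfrac12\sum_i\sigma_{\delta i}x_{\delta i}^2\leq 0 \quad\text{on } \mathcal{B}_{x},
\]
with equality only at the origin. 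The factor $\tfrac12$ comes from $\bar V_{\delta}$ and is immaterial. This step is delicate: the bound is valid only while the state stays in $\mathcal{B}_{x}$, so it cannot be applied along a whole trajectory without first knowing that the trajectory stays in the box. This circularity is the main obstacle, and the next step resolves it.

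\emph{Exit-time argument.} Take $\mathbf{x}_{\delta}(0)\in\mathcal{B}_{V}$. The vector field is smooth (sinusoidal), so solutions exist and are unique. Suppose the trajectory leaves $\mathcal{B}_{V}$, and let
\[
\tau=\sup\{t\geq 0:\mathbf{x}_{\delta}(s)\in\mathcal{B}_{V}\ \forall s\in[0,t]\}<\infty .
\]
On $[0,\tau]$ the state lies in $\mathcal{B}_{V}\subset\mathcal{B}_{x}$, so $V_{\delta}$ is nonincreasing there. Hence $V_{\delta}(\mathbf{x}_{\delta}(\tau))\leq V_{\delta}(\mathbf{x}_{\delta}(0))\leq V_{\delta}^{cr}$.

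If $V_{\delta}(\mathbf{x}_{\delta}(\tau))<V_{\delta}^{cr}$, continuity keeps $V_{\delta}<V_{\delta}^{cr}$ on a neighbourhood of $\tau$. Then the trajectory remains in $\mathcal{B}_{V}$ slightly beyond $\tau$, contradicting the definition of $\tau$.

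If instead $V_{\delta}(\mathbf{x}_{\delta}(\tau))=V_{\delta}^{cr}$, note that $\mathbf{x}_{\delta}(\tau)\neq0$, so $\dot V_{\delta}(\tau)<0$ by the strict bound, provided the inequality extends to the closed box. It does, because the maxima are taken over the closed set $\mathcal{B}_{x}$ and the Hadamard coefficients are continuous. A strictly negative derivative at $\tau$ means $V_{\delta}<V_{\delta}^{cr}$ on $(\tau,\tau+\varepsilon)$ for some $\varepsilon>0$. So the trajectory again stays in $\mathcal{B}_{V}$ past $\tau$, which is a contradiction.

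Hence $\tau=\infty$, and $\mathcal{B}_{V}(\mathbf{x}_{\delta}^{lim})$ is forward-invariant.
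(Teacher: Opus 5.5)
Your proof is correct and follows the same route as the paper: $\mathcal{B}_{V}\subseteq\mathcal{B}_{x}$, so $\dot V_{\delta}\le 0$ on $\mathcal{B}_{V}$, so trajectories cannot cross the level set $V_{\delta}=V_{\delta}^{cr}$. You make rigorous two steps the paper only asserts: the containment, via the scaling $V_{\delta}(t\mathbf{x}_{\delta})=t^{2}V_{\delta}(\mathbf{x}_{\delta})$, and the ``cannot cross'' step, via a first-exit-time argument that uses strict decrease away from the origin. (Your factor $\tfrac12$ is unnecessary, since the estimate is $-\sum_i\sigma_{\delta i}x_{\delta i}^2$, but it only weakens a true inequality.)
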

\begin{proof}
  Since the small-gain-like property holds within 
  $\mathcal{B}_{x}(\mathbf{x}_{\delta}^{lim})$ and $\mathcal{B}_{V}(\mathbf{x}_{\delta}^{lim}) \subseteq \mathcal{B}_{x}(\mathbf{x}_{\delta}^{lim})$, the dissipation inequality 
  $\dot V_{\delta} \le 0$ holds throughout $\mathcal{B}_{V}(\mathbf{x}_{\delta}^{lim})$. 
  Therefore, $V_{\delta}$ is non-increasing along system trajectories. 
  For any initial condition in $\mathcal{B}_{V}(\mathbf{x}_{\delta}^{lim})$, the trajectory cannot cross the level set defined by $V_{\delta}^{cr}$, and thus remains within $\mathcal{B}_{V}(\mathbf{x}_{\delta}^{lim})$ for all future time.
\end{proof}

\begin{remark}
The constructed sum-type Lyapunov function exhibits a certain decay feature over time. According to \eqref{equ_WholeSystem_dissipation_v2}, the derivative of $V_\delta$ satisfies
\begin{equation}
\dot V_{\delta}\leq -\sum_{i=1}^{n_{\delta}}{\sigma_{\delta i}x_{\delta i}^2}
\leq -\min\limits_i\left(\frac{2\sigma_{\delta i}}{c_{\delta i}}\right)V_{\delta}
\triangleq -\lambda V_{\delta}.
\end{equation}
Thus, the decay performance of the sum-type Lyapunov function can be upper-bounded as
\begin{equation}
V_{\delta}(t)\leq V_{\delta}(0)e^{-\lambda t}.
\end{equation}
Therefore, the overall dissipation trend of the post-fault trajectory can be quantitatively characterized, which may provide insight into the convergence tendency and relative stability-recovery speed.
\end{remark}

}


Consequently, the $n_{\delta}$-dimensional ellipsoidal region $\mathcal{B}_{V}$ provides an effective estimate of the large-signal stability region for the multi-converter system.
{\color{black}
Moreover, based on the standard volume formula of an $n$-dimensional ellipsoid
\cite{boydConvexOptimization2004}, the volume of the Lyapunov stability region $\mathcal{B}_{V}$ is given by
\begin{equation}\label{equ_EllipVolume}
  \mathrm{Vol}(\mathcal{B}_V)
  = \frac{\pi^{n_\delta/2}}{\Gamma(n_\delta/2 + 1)} (2V_{\delta}^{\mathrm{cr}})^{n_\delta/2}
  \prod_{i=1}^{n_\delta} c_{\delta i}^{-1/2}
\end{equation}
where $\Gamma(\cdot)$ is the Gamma function, defined as
\begin{align}
\Gamma\left(\frac{n}{2}+1\right) =
\begin{cases}
\left(n/2\right)!, & n \in 2\mathbb{Z}, \\
\dfrac{n!!}{2^{\frac{n+1}{2}}}\sqrt{\pi}, & n \in 2\mathbb{Z} + 1,
\end{cases}
\end{align}
and $n!$ and $n!!$ denote the factorial and double factorial, respectively.
}

{\color{black}
It is worth noting that the obtained ellipsoidal Lyapunov stability region serves as a conservative estimation of the actual stability region. On the one hand, the interaction terms among relative angle motions are uniformly treated as destabilizing disturbances in the interconnected-system-based stability characterization in \eqref{equ_Subsystem_dissipation}, whereas some favorable interactions may contribute to synchronization restoration under specific transient conditions. 
On the other hand, to obtain an explicit and solvable dissipation inequality form, maximum-value approximation and Young's inequality are employed in \eqref{equ_disp_ineq_relative}, which leads to a conservative description of the actual dissipation capabilities of relative angle motions.
}

With the explicit expression in \eqref{equ_EllipVolume}, a relatively large stability region estimate can be obtained by evaluating the small-gain-like property over multiple candidate angle-limit sets $\mathbf{X}_{\delta}^{lim}\triangleq[\mathbf{x}_{\delta 1}^{lim},\dots,\mathbf{x}_{\delta m}^{lim}]$ through the automated procedure illustrated in \figref{Fig3_Flowchart}. For each candidate $\mathbf{x}_{\delta i}^{lim}=\mathbf{X}_{\delta}^{lim}(i)$, the vector $\boldsymbol{\sigma}_{\delta}$ is optimized via a nonlinear solver to maximize the corresponding ellipsoid volume. The angle-limit set that yields the largest ellipsoid volume is then selected for large-signal stability evaluation, and the resulting ellipsoid serves as an effective estimate of the actual stability region.
{\color{black}
Moreover, the control parameters can also be included as optimization variables, as indicated in \figref{Fig3_Flowchart} by the control parameter vector $\boldsymbol{\rho}$, through which coordinated control parameter tuning among converters can be achieved for large-signal stability enhancement of multi-converter systems. 
}

{\color{black}
Consequently, by establishing the small-gain-like property in multi-converter systems with respect to angle limits and system parameters, the proposed small-gain-like framework enables direct large-signal stability assessment for systems containing different types of converters. 
Specifically, the high-dimensional ellipsoid defined in \eqref{equ_criterion_MM} provides an explicit stability region in the corresponding state space, with which critical fault-clearing states can be identified under various transient conditions.
On this basis, together with the calculation procedure in \figref{Fig3_Flowchart}, the proposed framework provides a potential basis for quantitative stability region estimation and offers a systematic perspective for stability-oriented parameter design in multi-converter systems.
}
\section{Case Study}\label{section_4}
Based on the proposed stability evaluation method, the large-signal stability of a paralleled GFL-GFM grid-connected system and a two-area four-converter system is thoroughly studied in this section. The small-gain-like properties of both systems are assessed and Lyapunov functions are constructed correspondingly. In this way, the ellipsoidal stability regions can be determined for both systems, which provide effective approximations of their actual large-signal stability regions.
\begin{figure}\centering
    \includegraphics[scale=0.95]{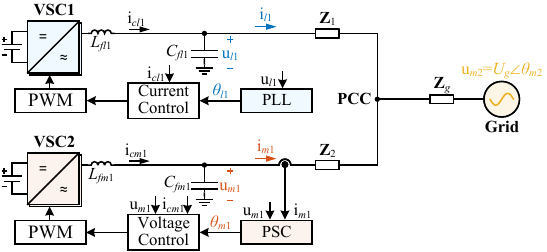}
    \vspace{-5pt}
    \caption{Structure of the paralleled GFL-GFM system}
    \label{Fig4_GFLGFMwithEg}
    \vspace{-5pt}    
\end{figure}

\begin{table}[tb]
    \vspace{-3pt}
    \centering
    \caption{Parameters of the Paralleled System}
    \vspace{-5pt}
    \renewcommand\arraystretch{1.2}
    \begin{tabular}{c|c|c}
    \hline
    Parameters & Descriptions & Values\\    \hline
    $I_{l1}e^{j\varphi_{l1}}$ & Current reference of VSC1  & 1 p.u.\\
    $U_{m1}$ & Voltage reference of VSC2 & 1 p.u.\\
    $P_{0m1}$ & Power reference of VSC2  & -0.5 p.u.\\
    $U_{g}$ & Grid voltage amplitude & 1 p.u.\\
    $K_{p1}$ & PLL proportional gain & 0.2 p.u.\\
    $D_{p1}$ & Droop coefficient of VSC2 & 20 p.u.\\ 
    $\mathbf{Z}_1$, $\mathbf{Z}_2$ & impedances connected to the converters & j0.1, j0.05 p.u.\\
    $\mathbf{Z}_g$ & Grid-connected impedance & j1.0 p.u.\\
    \hline
    \end{tabular}
    \label{tabel_parameters}
\end{table}

\subsection{Paralleled GFL-GFM Grid-connected System}\label{section_4a}
The paralleled system with two converters is first studied, as illustrated in \figref{Fig4_GFLGFMwithEg}. VSC1 and VSC2 operate as GFL and GFM converters respectively. They are connected to the point of coupling (PCC) through the circuit impedances $\mathbf{Z}_1$ and $\mathbf{Z}_2$, and deliver power to the grid via $\mathbf{Z}_g$. Define VSC1, VSC2 and the grid as nodes 1 to 3, the network admittance matrix is
\begin{equation}
  \mathbf{Y}_{net}^a=\frac{1}{\mathbf{ZZ}_{\Sigma}}
  \begin{bmatrix}
  \mathbf{Z}_2+\mathbf{Z}_g & -\mathbf{Z}_g & -\mathbf{Z}_2\\[-3pt]
  -\mathbf{Z}_g & \mathbf{Z}_1+\mathbf{Z}_g & -\mathbf{Z}_1\\[-3pt]
  -\mathbf{Z}_2 & -\mathbf{Z}_1 & \mathbf{Z}_1+\mathbf{Z}_2\\[-3pt]
    \end{bmatrix}
\end{equation}
where $\mathbf{ZZ}_{\Sigma}=\mathbf{Z}_1\mathbf{Z}_2+\mathbf{Z}_1\mathbf{Z}_g+\mathbf{Z}_2\mathbf{Z}_g$. 
Since the infinite-bus grid maintains a constant voltage amplitude and frequency, it can be modeled as a special case of GFM converter with voltage amplitude $U_{m2}=U_g$ and an infinite droop coefficient $D_{p2}\to \infty$. Through the self-damping evaluation introduced in Section~\ref{section_2c}, the relative angle motion between VSC1 and VSC2, denoted as $\Delta\delta_{11}^{lm}$, is selected as the GFL-GFM relative motion, and the relative angle motion between VSC2 and the grid, denoted as $\Delta\delta_{12}^{mm}$, is selected as the GFM-GFM relative motion. Define the state vector of the paralleled system as $\mathbf{x}_a=[\Delta\delta_{11}^{lm},\Delta\delta_{12}^{mm}]^{\mathrm{T}}\triangleq[x_{a1},x_{a2}]^{\mathrm{T}}$, the relative angle dynamics are mathematically depicted as
\begin{equation}
  \begin{aligned}\label{equ_Parallel_1}
    \dot x_{a1}&=A_{11}^{lm}\Delta_{sin}(x_{a1},\delta_{11}^{lme}+\varphi_{11}^{lm})\\
    &+K_{p1}G_{12}^{lm}U_{g}\Delta_{sin}(x_{a1}+x_{a2},\delta_{12}^{lme}-\phi_{12}^{lm})\\
    &+\frac{Y_{12}^{mm}U_{m1}U_{g}}{D_{p1}}\Delta_{cos}(-x_{a2},\delta_{21}^{mme}+\phi_{12}^{mm})\\
    &\triangleq g_{1x1}^a(\mathbf{x}_{a})x_{a1}+g_{1x2}^a(\mathbf{x}_{a})x_{a2}\\
    \dot x_{a2}&=A_{12}^{mm}\Delta_{sin}(x_{a2},\delta_{12}^{mme}+\varphi_{12}^{mm})\\
    &-\frac{G_{11}^{ml}U_{m1}I_{l1}}{D_{p1}}\Delta_{cos}(x_{a1},\delta_{11}^{lme}+\varphi_{l1}+\phi_{11}^{ml})\\
    &\triangleq g_{2x2}^a(\mathbf{x}_{a})x_{a2}+g_{2x1}^a(\mathbf{x}_{a})x_{a1}
  \end{aligned}
\end{equation}

\begin{figure}\centering
   \includegraphics[scale=1.3]{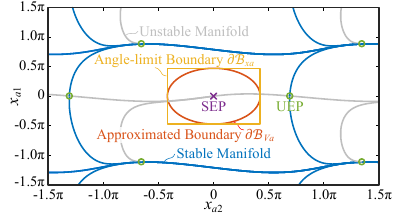}\\
    \vspace{-3pt}
    \quad\ \  \footnotesize (a)\\
    \includegraphics[scale=1.3]{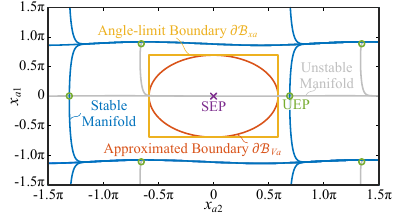}\\
    \vspace{-3pt}
    \quad\ \ \  \footnotesize (b)\\
    \vspace{-5pt}
    \caption{Boundary approximation results for paralleled systems. (a) Original parameters. (b) Optimized parameters.}
    \label{Fig5_Paralleled_Phaseplane}
\end{figure}

The specific expressions of the amplitudes $A_{11}^{lm}$, $A_{12}^{mm}$ and the phases $\varphi_{11}^{lm}$, $\varphi_{12}^{mm}$ are solved with \eqref{equ_Alm}, \eqref{equ_Amm}, \eqref{equ_philm} and \eqref{equ_phimm}.
The coefficient functions $g_{ixk}^a$ are further calculated with the Hadamard's lemma in \eqref{equ_hadamard}, and their detailed expressions are
\begin{equation}
  \begin{aligned}
  g_{1x1}^a(\mathbf{x}_{a})&=A_{11}^{lm}{\mathcal{F}_{sin}(x_{a1},\delta_{11}^{lme}+\varphi_{11}^{lm})}\\
  &+K_{p1}G_{12}^{lm}U_{g}\mathcal{F}_{sin}(x_{a1}+x_{a2},\delta_{12}^{lme}-\phi_{12}^{lm})\\
  g_{1x2}^a(\mathbf{x}_{a})&=K_{p1}G_{12}^{lm}U_{g}\mathcal{F}_{sin}(x_{a1}+x_{a2},\delta_{12}^{lme}-\phi_{12}^{lm})\\
  &+\frac{Y_{12}^{mm}U_{m1}U_{g}}{D_{p1}}\mathcal{F}_{cos}(-x_{a2},\delta_{21}^{mme}+\phi_{12}^{mm})\\
  g_{2x1}^a(\mathbf{x}_{a})&=-\frac{G_{11}^{ml}U_{m1}I_{l1}}{D_{p1}}\mathcal{F}_{cos}(x_{a1},\delta_{11}^{lme}+\varphi_{l1}+\phi_{11}^{ml})\\
  g_{2x2}^a(\mathbf{x}_{a})&=A_{12}^{mm}\mathcal{F}_{sin}(x_{a2},\delta_{12}^{mme}+\varphi_{12}^{mm})
  \end{aligned}
\end{equation}
where $\mathcal{F}_{sin}\triangleq\frac{\Delta_{sin}(x,a)}{x}$ and $\mathcal{F}_{cos}\triangleq\frac{\Delta_{cos}(x,a)}{x}$, respectively. Accordingly, the entries in the dissipation matrix $\mathbf{E}_a$ can be directly solved using \eqref{equ_Ematrix_Forsystem}, and the small-gain-like property of the paralleled system holds when $a_{a1}$ and $a_{a2}$ are positive and  $\frac{b_{a12}}{a_{a1}}\frac{b_{a21}}{a_{a2}}<1$.
The ratios $\frac{b_{a12}}{a_{a1}}$ and $\frac{b_{a21}}{a_{a2}}$ are further expressed with the maximum values of coefficient functions, as 
\begin{equation}
  \begin{aligned}
    &\frac{b_{a12}}{a_{a1}}=\max_{\mathbf{x}_a}{\left|g_{1x2}^a\right|}\left[-\frac{2\max_{\mathbf{x}_a}{\left(g_{1x1}^a\right)}}{\epsilon_{12}^2}-\frac{\max_{\mathbf{x}_a}{\left|g_{1x2}^a\right|}}{\epsilon_{12}^4}\right]^{-1}\\
    &\frac{b_{a21}}{a_{a2}}=\max_{\mathbf{x}_a}{\left|g_{2x1}^a\right|}\left[-\frac{2\max_{\mathbf{x}_a}{\left(g_{2x2}^a\right)}}{\epsilon_{21}^2}-\frac{\max_{\mathbf{x}_a}{\left|g_{2x1}^a\right|}}{\epsilon_{21}^4}\right]^{-1}.
  \end{aligned}
\end{equation}
Note that $\max{\left(g_{ixi}^a\right)}$ are negative because $a_{ai}>0$. In this manner, the Young's coefficients are selected as
\begin{equation}
  \frac{1}{\epsilon_{12}^2}=\frac{-\max_{\mathbf{x}_a}{\left(g_{1x1}^a\right)}}{\max_{\mathbf{x}_a}{\left|g_{1x2}^a\right|}}\ 
  \frac{1}{\epsilon_{21}^2}=\frac{-\max_{\mathbf{x}_a}{\left(g_{2x2}^a\right)}}{\max_{\mathbf{x}_a}{\left|g_{2x1}^a\right|}}.
\end{equation}
Consequently, the small-gain-like property of the paralleled system in \figref{Fig4_GFLGFMwithEg} holds when
\begin{equation}\label{equ_IOS_Paralleled}
  \begin{aligned}
    &\max_{\mathbf{x}_a}{\left(g_{1x1}^a(\mathbf{x}_{a})\right)}<0\quad \max_{\mathbf{x}_a}{\left(g_{2x2}^a(\mathbf{x}_{a})\right)}<0\\
    &\left(\frac{\max_{\mathbf{x}_a}{\left|g_{1x2}^a(\mathbf{x}_{a})\right|}}{\max_{\mathbf{x}_a}{\left(g_{1x1}^a(\mathbf{x}_{a})\right)}}\right)^2\left(\frac{\max_{\mathbf{x}_a}{\left|g_{2x1}^a(\mathbf{x}_{a})\right|}}{\max_{\mathbf{x}_a}{\left(g_{2x2}^a(\mathbf{x}_{a})\right)}}\right)^2<1.
  \end{aligned}
\end{equation}

As a result, for a given relative angle-limit set $\mathbf{x}_{a}^{lim}\triangleq[x_{a1}^{lim},x_{a2}^{lim}]^\mathrm{T}$, if the inequalities in \eqref{equ_IOS_Paralleled} are satisfied for all $\mathbf{x}_a$ within
\begin{equation} 
  \mathcal{B}_{xa}(\mathbf{x}_{a}^{lim}) \triangleq \left\{ \mathbf{x}_{a} \,\middle|\, -\mathbf{x}_{a}^{lim} \leq \mathbf{x}_{a} \leq \mathbf{x}_{a}^{lim} \right\}
\end{equation}
then the small-gain-like property of the paralleled system is established within the corresponding angle limit, and a Lyapunov function can be constructed as
\begin{equation}
  V_a(\mathbf{x}_a)=\mathbf{c}_{a}^{\mathrm{T}}\bar{V}_{a}=\frac{1}{2}c_{a1}x_{a1}^2+\frac{1}{2}c_{a2}x_{a2}^2.
\end{equation}

The two-dimensional forward-invariant ellipsoidal stability region $\mathcal{B}_{Va}$ can be further calculated from \eqref{equ_critical_engy} and \eqref{equ_criterion_MM}.
Based on this, multiple angle-limit candidates are examined with the algorithm in \figref{Fig3_Flowchart}, and the one yielding the largest ellipse area is selected for large-signal stability evaluation. 

To verify the effectiveness of the proposed large-signal stability evaluation method, a paralleled GFL-GFM system with parameters listed in TABLE~\ref{tabel_parameters} is studied. Using the algorithm in \figref{Fig3_Flowchart}, an optimized angle-limit set $\mathbf{x}_{a}^{lim}=[1.485,1.322]^\mathrm{T}$ is obtained. The corresponding approximated boundary $\partial\mathcal{B}_{Va}$ is then visualized in the $x_{a1}-x_{a2}$ plane, as shown in \figref{Fig5_Paralleled_Phaseplane}(a). In the figure, the obtained ellipsoidal stability boundary $\partial\mathcal{B}_{Va}$ is represented by the orange line and lies entirely within the corresponding angle-limit boundary $\partial\mathcal{B}_{xa}$ shown in yellow. 
\textcolor{black}{
For second-order nonlinear systems, the stability boundary with respect to a stable equilibrium point (SEP) is represented by the one-dimensional stable manifolds of its adjacent unstable equilibrium points (UEPs)\cite{khalilNonlinearSystemEdition2002}.
Hence, the actual large-signal stability boundary of the paralleled system can be numerically traced through backward time-domain numerical integration and is visualized by the blue lines in \figref{Fig5_Paralleled_Phaseplane}.
}
It can be observed that the computed ellipse area lies within the actual boundary, which proves that the proposed method could provide effective stability assessment for the paralleled system.

Moreover, the control parameters $K_{p1}$ and $D_{p1}$ in the paralleled GFL-GFM system are further tuned to expand the stability region. The optimization of control parameters can be performed using the algorithm in \figref{Fig3_Flowchart}, where both the control parameters and angle-limit sets are considered as optimization variables. Through this process, the maximum ellipsoid volume, which corresponds to the largest stability region area, is searched within the predefined bounds.
For the original paralleled system in \figref{Fig5_Paralleled_Phaseplane}(a), an optimized control parameter group is obtained as $[K_{p1};D_{p1}]=[0.5;50]$, and the corresponding angle-limit set is $\mathbf{x}_{a}^{lim}=[2.182,1.840]^\mathrm{T}$. The resulting elliptical boundary $\partial\mathcal{B}_{Va}$ is illustrated by the orange line in \figref{Fig5_Paralleled_Phaseplane}(b), along with the angle-limit boundary $\partial\mathcal{B}_{xa}$. It is evident that the stability region $\mathcal{B}_{Va}$ is significantly enlarged with the optimized control parameters, which further validates the effectiveness of the proposed method for quantitative stability evaluation and control parameter tuning.
\begin{figure}\centering
    \includegraphics[scale=0.92]{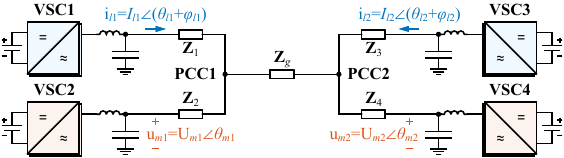}
    \vspace{-5pt}
    \caption{Structure of the two-area four-converter system}
    \label{Fig6_2A4M_Structure}
\end{figure}

\vspace{-5pt}
\subsection{Two-area Four-converter System}\label{section_4b}
The large-signal stability performance of a four-converter system is also analyzed via the proposed small-gain-like method. The system structure is shown in \figref{Fig6_2A4M_Structure}. Two converter groups are connected via a transmission line with the impedance $\mathbf{Z}_g$, where VSC1 and VSC3 operate as GFL converters and VSC2 and VSC4 serve as GFM converters, respectively. The impedances $\mathbf{Z}_1$ and $\mathbf{Z}_2$ connect VSC1 and VSC2 to PCC1, while the impedances $\mathbf{Z}_3$ and $\mathbf{Z}_4$ connect VSC3 and VSC4 to PCC2, separately. The currents of the GFL converters (VSC1 and VSC3) are defined as $ \mathrm{i}_{li}=I_{li}e^{j(\theta_{li}+\varphi_{li})},i=1,2$, where $\theta_{li}$ are the power angles generated by the PLLs. The voltages of the GFM converters (VSC2 and VSC4) are given by $\mathrm{u}_{mi}=U_{mi}e^{j\theta_{mi}},i=1,2$, where $\theta_{mi}$ are the power angles of GFM converters. Defining VSC1 and VSC3 as nodes 1 and 2, and VSC2 and VSC4 as nodes 3 and 4, the corresponding network admittance matrix $\mathbf{Y}_{net}^b$ can be calculated based on the circuit topology. Accordingly, there is
\begin{equation}
[\mathrm{i}_{l1},\mathrm{i}_{l2},\mathrm{i}_{m1},\mathrm{i}_{m2}]^\mathrm{T}=\mathbf{Y}_{net}^b[\mathrm{u}_{l1},\mathrm{u}_{l2},\mathrm{u}_{m1},\mathrm{u}_{m2}]^\mathrm{T}
\end{equation}

Through the comparison of self-damping terms derived from all relative motions between GFL and GFM converters, the relative motion groups (VSC1, VSC2), (VSC3, VSC4) and (VSC2, VSC4) are chosen for large-signal stability assessment. 
It is worth noting that the grouping results indicate that the nearest GFM converter of a GFL converter is the one that connected to the same PCC, which aligns with the network structure.
The state vector is defined as 
\begin{equation}
  \mathbf{x}_b=[\Delta\delta_{11}^{lm},\Delta\delta_{22}^{lm},\Delta\delta_{12}^{mm}]^{\mathrm{T}}\triangleq[x_{b1},x_{b2},x_{b3}]^{\mathrm{T}}
\end{equation}
and the angle dynamics of the four-converter system are mathematically depicted by three subsystems $\Sigma_{bi}$, where
\begin{equation}
  \Sigma_{bi}: \dot x_{bi}=f_{bi}(\mathbf{x}_{b})\triangleq \sum\nolimits_{j=1}^3{g_{ixj}^b(\mathbf{x}_{b})x_{bj}}\ (i,j=1,2,3)
\end{equation}

\begin{table}[tb]
    \vspace{-3pt}
    \centering
    \caption{Parameters of the Four-converter System}
    \vspace{-5pt}
    \renewcommand\arraystretch{1.2}
    \begin{tabular}{c|c|c}
    \hline
    Parameters & Descriptions & Values\\    \hline
    $I_{l1}e^{j\varphi_{l1}}$ & Current reference of VSC1  & 0.5 p.u.\\
    $I_{l2}e^{j\varphi_{l2}}$ & Current reference of VSC3  & -1 p.u.\\
    $U_{m1,2}$ & Voltage references of GFM converters & 1 p.u.\\
    $P_{0m1,2}$ & Power references of GFM converters & -0.2, 0.5 p.u.\\
    $K_{p1,2}$ & PLL proportional gains & 0.2 p.u.\\
    $D_{p1,2}$ & Droop coefficients of GFM converters& 20 p.u.\\ 
    $\mathbf{Z}_1$ & Impedance between VSC1 and PCC1 & 0.02+j0.2 p.u.\\
    $\mathbf{Z}_2$ & Impedance between VSC2 and PCC1 & 0.005+j0.05 p.u.\\
    $\mathbf{Z}_3$ & Impedance between VSC3 and PCC2 & 0.01+j0.1 p.u.\\
    $\mathbf{Z}_4$ & Impedance between VSC4 and PCC2 & 0.005+j0.05 p.u.\\
    $\mathbf{Z}_g$ & Impedance between PCC1 and PCC2 & 0.1+j1.0 p.u.\\
    \hline
    \end{tabular}
    \label{tabel_parameters_2A4M}
    \vspace{-10pt}
\end{table}
\begin{figure}\centering
    \includegraphics[scale=0.305]{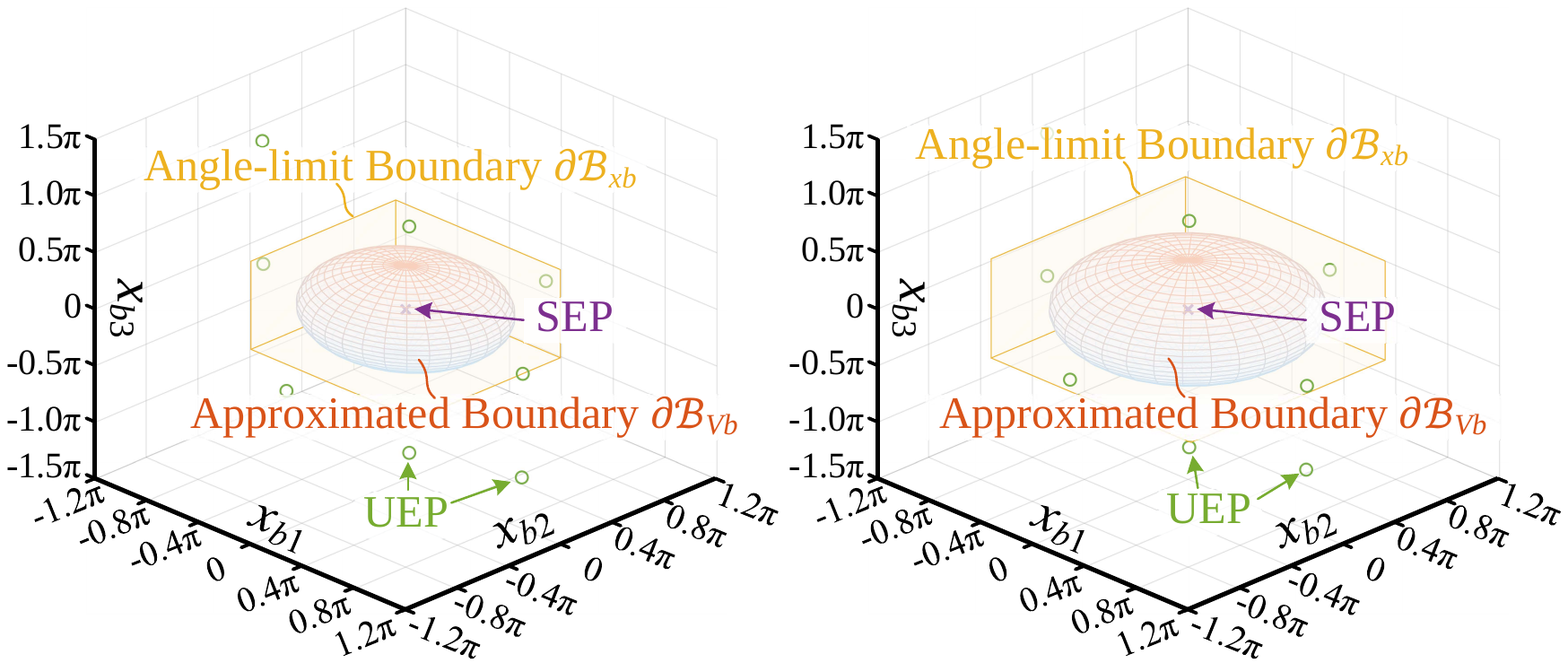}
    \vspace{-5pt}
    \ \footnotesize (a)\quad\quad\quad\quad\quad\quad\quad\quad\quad\quad\quad\quad\quad\quad\quad  \footnotesize (b)\\
    \caption{Boundary approximation results for four-converter systems. (a) Original parameters. (b) Optimized parameters.}
    \label{Fig7_2A4M_PhasePlane}
\end{figure}

The detailed expressions of $f_{bi}(\mathbf{x}_{b})$ are given in \eqref{equ_RelativeLM_Active} and \eqref{equ_RelativeMM_Active}, and the associated coefficient functions $g_{ixj}^b(\mathbf{x}_{b})$ are calculated using Hadamard's lemma in \eqref{equ_hadamard}. The dissipation matrix $\mathbf{E}_b$ for the four-converter system is then constructed as
\begin{align}\label{equ_Ematrix_2A4M}
    &\quad\quad\mathbf{E}_{b}=
    \begin{bmatrix}
    a_{b1}& -b_{b12}& -b_{b13}\\[-3pt]
    -b_{b21}& a_{b2}& -b_{b23}\\[-3pt]
    -b_{b31}& -b_{b32}& a_{b3}\\[-3pt]
    \end{bmatrix}\\\nonumber
    &\left\{
      \begin{aligned}
        a_{b i}&=-\max_{\mathbf{x}_b}{\left(g_{ixi}^b(\mathbf{x}_{b})\right)}-\sum\nolimits_{j=1,j\neq i}^{3}\max_{\mathbf{x}_b}{\left|g_{ixj}^b(\mathbf{x}_{b})\right|}/2\\[-2pt]
        b_{b ij}&=\max_{\mathbf{x}_b}{\left|g_{ixj}^b(\mathbf{x}_{b})\right|}/2
      \end{aligned}
    \right.
\end{align}
where all Young's coefficients $\epsilon_{ij}$ in \eqref{equ_Ematrix_Forsystem} are set to 1 for simplicity. In this manner, for the angles bounded by
\begin{equation} 
  \begin{aligned}
      \mathcal{B}_{xb}(\mathbf{x}_{b}^{lim})\triangleq \left\{ \mathbf{x}_{b} \,\middle|\, -\mathbf{x}_{b}^{lim} \leq \mathbf{x}_{b} \leq \mathbf{x}_{b}^{lim} \right\}\\
  \end{aligned}
\end{equation}
where $\mathbf{x}_{b}^{lim}\triangleq[x_{b1}^{lim},x_{b2}^{lim},x_{b3}^{lim}]^\mathrm{T}>0$ is the angle-limit set. If all leading principal minors of $\mathbf{E}_b$ are positive for any $\mathbf{x}_b$ in $\mathcal{B}_{xb}$, then the small-gain-like property of the four-converter system holds within the $\mathcal{B}_{xb}$. Accordingly, a Lyapunov function is built for the four-converter system, written as
\begin{equation}
  V_{b}(\mathbf{x}_b)=\mathbf{c}_b^\mathrm{T}\bar V_b=\frac{1}{2}\left(\boldsymbol{\sigma}_b^{\mathrm{T}}\mathbf{E}_b^{-1}\right)[ x_{b1}^2\ x_{b2}^2\ x_{b3}^2]^\mathrm{T}
\end{equation}
where the coefficient vector $\mathbf{c}_b$ is determined with $\boldsymbol{\sigma}_{b}$ and the dissipation matrix $\mathbf{E}_b$. According to the stability criterion in \eqref{equ_criterion_MM}, the large-signal stability region of the four-converter system can be estimated with a three-dimensional ellipsoid, as
\begin{equation}
    \mathcal{B}_{Vb} \triangleq \left\{ \mathbf{x}_{b} \,\middle|\,  V_{b}(\mathbf{x}_b)\leq \min_{\mathbf{x}_b \in \partial\mathcal{B}_{xb}(\mathbf{x}_{b}^{lim})} V_{b}(\mathbf{x}_{b})\right\}
\end{equation}

With the parameters in TABLE~\ref{tabel_parameters_2A4M}, the critical angle-limit set, $\mathbf{x}_b^{lim}=[1.994,1.753,1.221]^{\mathrm{T}}$, which yields the largest ellipsoid volume, is solved utilizing the algorithm in \figref{Fig3_Flowchart}. The corresponding approximated stability boundary is demonstrated in \figref{Fig7_2A4M_PhasePlane}(a) as the orange ellipsoid. In the same figure, a portion of the cubic angle-limit boundary $\partial B_{xb}$ is visualized as yellow planes, and all UEPs surrounding the SEP are marked as green circles. 
Based on the ellipsoidal region, the large-signal stability of the four-converter system can be directly and quantitatively evaluated by checking whether the post-fault relative angle lies in $\mathcal{B}_{Vb}$. In this manner, the computational burden for large-signal stability assessment is remarkably released.

Furthermore, the control parameters are optimized through the process in \figref{Fig3_Flowchart}, and an optimized parameter set $[K_{p1};K_{p2};D_{p1};D_{p2}]=[0.471;0.488;18.282;39.914]$ is obtained, with the corresponding angle-limit set $\mathbf{x}_b^{lim}=[2.415,2.355,1.375]^{\mathrm{T}}$. The estimated region of the optimized four-converter system is visualized in \figref{Fig7_2A4M_PhasePlane}(b), and it is evident that the volume of the ellipsoid $\mathcal{B}_{Vb}$ is greatly enlarged with the optimized parameters, which further demonstrates the capability of the proposed method in parameter optimization and stability evaluation.

\section{Experimental Validation}\label{section_5}
\begin{figure}\centering
    \includegraphics[scale=0.42]{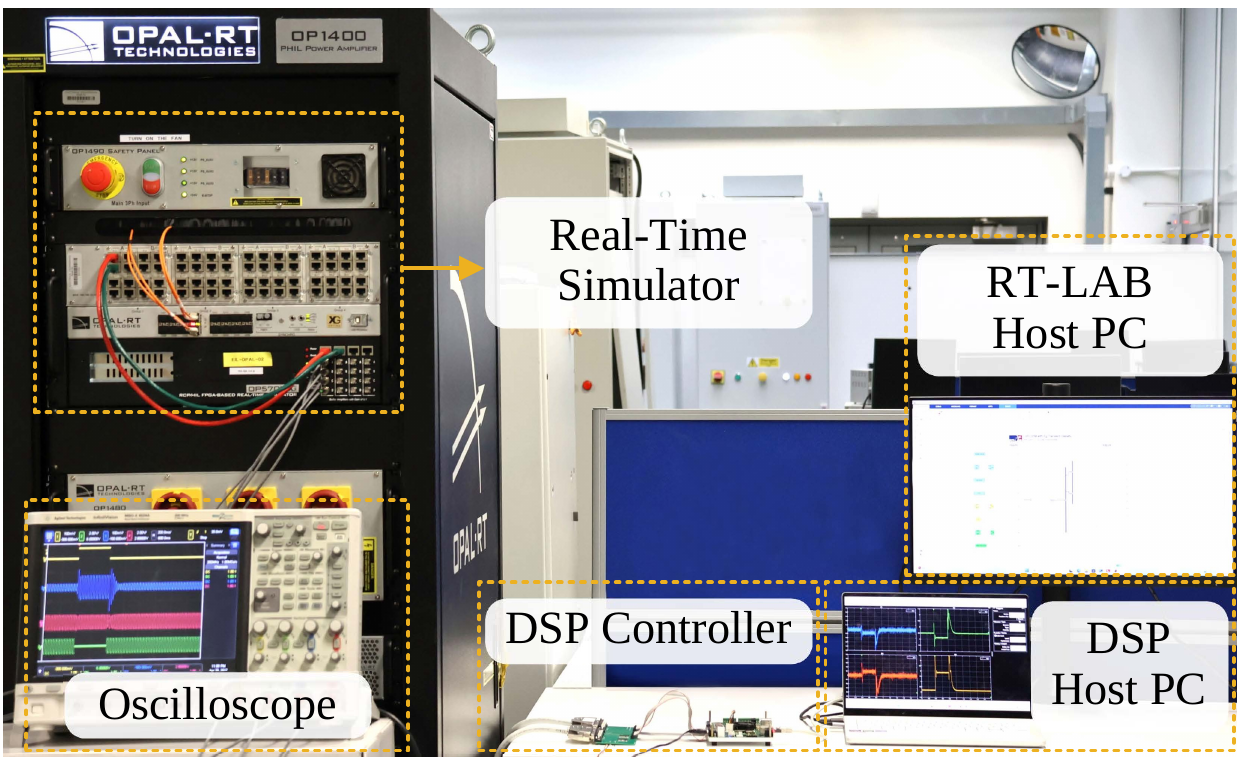}
    \vspace{-5pt}
    \caption{Control hardware-in-loop experimental platform}
    \label{Fig8_Platform}
\end{figure}

In this section, experimental validations are conducted using the RT-LAB based hardware-in-loop platform illustrated in \figref{Fig8_Platform}. The critical fault-clearing points under various fault types for both the paralleled and four-converter systems are identified using the proposed method and are further validated through time-domain transient waveforms.

The paralleled GFL-GFM system in Section~\ref{section_4a} is firstly examined. Consider a voltage sag fault occurring at the grid, the relative angles would diverge from the SEP during the fault and the system may become unstable if the fault duration exceeds a certain threshold. 
\textcolor{black}{The post-fault stability is thus characterized by the critical fault-clearing time (CCT), which is defined as the maximum fault duration that allows the system to converge back to the same SEP after fault clearance. By adopting the CCT as a key stability indicator, exhaustive numerical integration for actual boundary characterization in high-order systems can be effectively avoided, and the computational burden for identifying the actual critical stability condition is thereby significantly alleviated.}
 Moreover, the critical fault-clearing angle (CCA) is recognized as the angle $\delta^{mm}_{12}$ between the GFM converter and the grid at the fault-clearing moment when the fault-clearing time (FCT) equals the CCT. 
The CCT and CCA can be directly identified with the proposed method by locating the intersection point between the fault-on trajectory and the ellipsoidal boundary in \figref{Fig5_Paralleled_Phaseplane}, and the evaluation results for the paralleled systems with both the original and optimized control parameters are listed in TABLE~\ref{tabel_result_2M}, denoted by CCT$_A$ and CCA$_A$. For comparison, the actual CCT and CCA values, obtained through repetitive iteration process, are also given in the table and referred to as CCT$_R$ and CCA$_R$. It is clear that the estimated CCT$_A$ and CCA$_A$ results in both cases offer an effective approximation of the actual values, which demonstrates the validity of the proposed method.

Furthermore, the time-domain waveforms of the paralleled system with optimized parameters are illustrated in \figref{Fig9_Waveform_2M}(a) and \figref{Fig9_Waveform_2M}(b), with different FCTs of 729 ms and 882 ms, separately. 
{\color{black} For a voltage sag fault, the grid voltage drops to zero during the fault period, as indicated by the gray area in the figure. The grid voltage is restored at the fault-clearing moment marked by the light blue dashed line, which coincides with the end of the fault duration.
}
For the first case with FCT of 729 ms, the post-fault system is stable as the angles return to the original SEP, and the stability is correctly indicated by the proposed method as CCT$_A$ equals 729 ms. As FCT increases to 882 ms, the system becomes unstable as $\delta^{mm}_{12}$ undergoes a one-cycle oscillation and reaches a different SEP. 
{\color{black}It is noteworthy that the relative angles are folded into the interval $[-\pi,\pi)$ for clearer visualization, since electrical quantities in power systems are generally expressed as sums of trigonometric functions of relative angles, and these trigonometric functions are 2$\pi$-periodic. Nevertheless, the stable cases recognized by the proposed method would not exhibit cross-period transient relative angle trajectories, since the proposed method identifies a forward-invariant Lyapunov stability region associated with the original pre-fault SEP.}
The proposed evaluation method also predicts the instability as the CCT$_A$ is smaller than the FCT of 882 ms. As a result, the proposed method is proved to offer accurate stability assessments for the paralleled GFL-GFM system under voltage sag faults.
\begin{table}[tb]
  \centering
  \caption{Stability Evaluation Results of Parallel Systems}
  \vspace{-5pt}
  \renewcommand\arraystretch{1.55}
  \begin{tabular}{c|c|c|c|c|c|c}
    \hline
    & \makecell{$\mathrm{CCT}_A$ \\ (ms)} & \makecell{$\mathrm{CCA}_A$ \\ (rad)} 
    & \makecell{$\mathrm{CCT}_R$ \\ (ms)} & \makecell{$\mathrm{CCA}_R$ \\ (rad)} 
    & \makecell{$\varphi_{pj,A}^{\max}$ \\ (rad)} & \makecell{$\varphi_{pj,R}^{\max}$ \\ (rad)} \\
    \hline
    Original  & 207 & 1.81 & 338 & 2.65 & 1.32 & 2.17 \\
    Optimized & 729 & 2.32 & 860 & 2.66 & 1.84 & 2.17 \\
    \hline
  \end{tabular}
  \vspace{-6pt}
  \label{tabel_result_2M}
\end{table}
\begin{figure}\centering
   \includegraphics[scale=1.05]{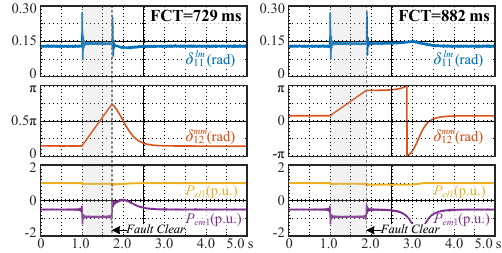}\\
    \vspace{-1pt}
    {\color{black}
    \hspace{0.8em}\ \footnotesize (a)\quad\quad\quad\quad\quad\quad\quad\quad\quad\quad\quad\quad\quad\quad\quad  \footnotesize (b)\\
    \vspace{1pt}}
    \includegraphics[scale=1.05]{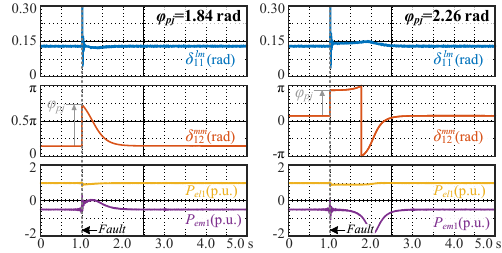}\\
    \vspace{-1pt}
    {\color{black}
    \hspace{0.8em}\ \footnotesize (c)\quad\quad\quad\quad\quad\quad\quad\quad\quad\quad\quad\quad\quad\quad\quad  \footnotesize (d)\\
    \vspace{-5pt}}
    \caption{Experimental results of the paralleled GFL-GFM system. (a) Voltage sag fault with FCT=729 ms. (b) Voltage sag fault with FCT=882 ms. (c) Phase jump fault with $\varphi_{pj}$=1.84 rad. (d) Phase jump fault with $\varphi_{pj}$=2.26 rad.}
    \label{Fig9_Waveform_2M}
\end{figure}

Moreover, the grid phase jump fault is also analyzed for the paralleled systems, where an abrupt phase shift occurs at the grid and may cause converters to lose synchronization with the grid when the jump angle is too large. The maximum allowable phase jump angle, referred to as $\varphi_{pj}^{max}$, is of key concern and the actual values for both paralleled systems are calculated through the iteration process and listed as $\varphi_{pj,R}^{max}$ in TABLE~\ref{tabel_result_2M}. Additionally, since the relative angle between converters, $\delta^{lm}_{11}$, remains unchanged while the angle between the GFM converter and the grid, $\Delta\delta^{mm}_{12}$, jumps instantaneously at the fault moment, the approximated maximum phase jump angle, denoted as $\varphi_{pj,A}^{max}$, can be obtained by identifying the intersection point between $\Delta\delta^{lm}_{11}=0$ and the ellipsoidal boundary in \figref{Fig5_Paralleled_Phaseplane}. By comparing the maximum phase jump angle results in TABLE~\ref{tabel_result_2M}, it is evident that $\varphi_{pj,A}^{max}$ provides an effective estimate of the actual $\varphi_{pj,R}^{max}$ for both cases.

The transient waveforms under the phase jump fault for the paralleled system with optimized parameters are shown in \figref{Fig9_Waveform_2M}(c) and \figref{Fig9_Waveform_2M}(d), where different phase jump angles $\varphi_{pj}$ are investigated. At t=1s, the grid undergoes an instantaneous phase shift, causing the angle $\delta_{12}^{mm}$ to jump simultaneously, as the gray arrow indicates. When the angle shift is 1.84 rad, the system returns to its original SEP after fault clearance, and the stability is correctly reflected by the estimated result, as $\varphi_{pj,A}^{max}$ of 1.84 rad equals the current phase jump angle. For the case with angle shift of 2.26 rad in \figref{Fig9_Waveform_2M}(d), the converters lose synchronization with the grid during the transient and eventually settle at a new SEP. The instability is also accurately indicated by the proposed method since the actual phase jump exceeds $\varphi_{pj,A}^{max}$. In this manner, the effectiveness of the proposed stability evaluation method for paralleled systems under phase jump faults is further validated.
\begin{table}[tb]
  \centering
  \caption{Stability Evaluation Results of Four-converter Systems}
  \vspace{-5pt}
  \renewcommand\arraystretch{1.3}
  \begin{tabular}{c|c|c|c|c}
    \hline
    & \multicolumn{2}{c|}{Short-circuit fault at $\mathbf{Z}_g$} & \multicolumn{2}{c}{Short-circuit fault at $\mathbf{Z}_1$} \\
    \cline{2-5}
    & \makecell{$\mathrm{CCT}_A$ (ms)} & \makecell{$\mathrm{CCT}_R$ (ms)}
    & \makecell{$\mathrm{CCT}_A$ (ms)} & \makecell{$\mathrm{CCT}_R$ (ms)}\\
    \hline
    Original  & 114 & 208 & 40 & 153 \\
    Optimized & 228 & 404 & 40 & 129\\
    \hline
  \end{tabular}
  \vspace{-6pt}
  \label{tabel_result_4M}
\end{table}
\begin{figure}\centering
   \includegraphics[scale=1.05]{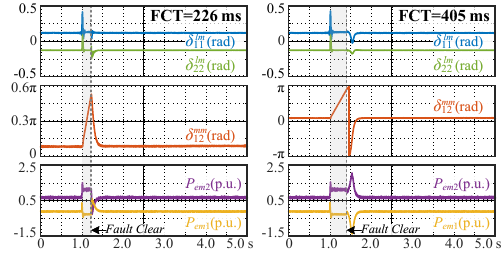}\\
    \vspace{-1pt}
    {\color{black}
    \hspace{0.8em}\ \footnotesize (a)\quad\quad\quad\quad\quad\quad\quad\quad\quad\quad\quad\quad\quad\quad\quad  \footnotesize (b)\\
    \vspace{1pt}}
    \includegraphics[scale=1.05]{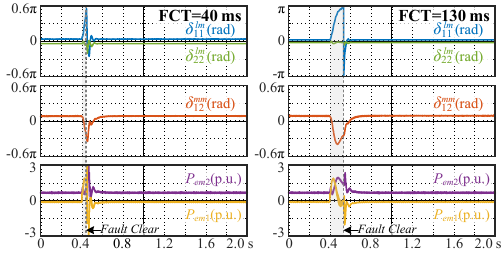}\\
    \vspace{-1pt}
    {\color{black}
    \hspace{0.8em}\ \footnotesize (c)\quad\quad\quad\quad\quad\quad\quad\quad\quad\quad\quad\quad\quad\quad\quad  \footnotesize (d)\\
    \vspace{-5pt}}
    \caption{Experimental results of the four-converter system. (a) Short-circuit fault at $\mathbf{Z}_g$ with FCT=226 ms. (b) Short-circuit fault at $\mathbf{Z}_g$ with FCT=405 ms. (c) Short-circuit fault at $\mathbf{Z}_1$ with FCT=40 ms. (d) Short-circuit fault at $\mathbf{Z}_1$ with FCT=130 ms. }
    \label{Fig10_Waveform_4M}
\end{figure}
\vspace{-2pt}
{\color{black}The four-converter system in Section~\ref{section_4b} is further tested under short-circuit faults occurring either at the midpoint of the impedance $\mathbf{Z}_g$ or at the midpoint of the impedance $\mathbf{Z}_1$. The corresponding estimated CCT results and their actual values for the two four-converter systems in \figref{Fig6_2A4M_Structure} are presented in TABLE~\ref{tabel_result_4M}, denoted as CCT$_A$ and CCT$_R$, respectively, where the CCT$_A$ results are obtained by identifying the intersection point between the fault-on trajectory and the ellipsoidal boundary shown in \figref{Fig7_2A4M_PhasePlane}. From TABLE~\ref{tabel_result_4M}, it can be clearly observed that the estimated CCT results are always smaller than the corresponding actual CCT results, thereby demonstrating the conservativeness and effectiveness of the proposed stability evaluation method.
}

When a short-circuit fault occurs at the midpoint of $\mathbf{Z}_g$, the power-sharing among the two converter groups is disrupted and the two groups diverge from the common speed. The corresponding time-domain waveforms of the four-converter system with optimized parameters under different FCTs of 226 ms and 405 ms are illustrated in \figref{Fig10_Waveform_4M}(a) and \figref{Fig10_Waveform_4M}(b). 
{\color{black} During short-circuit fault transients, the fault persists throughout the fault period marked by the gray area in \figref{Fig10_Waveform_4M}. The fault is then cleared by removing the short-circuit point at the fault-clearing moment marked by the gray dashed line, which coincides with the end of the fault duration.
}
It is evident that the system remains stable when the FCT is 226 ms as the post-fault angles converge to the SEP. As the FCT increases to 405 ms in \figref{Fig10_Waveform_4M}(b), the two GFM converters lose synchronization as $\delta_{12}^{mm}$ undergoes a one-cycle oscillation. Both the stability and the instability are correctly indicated by the proposed method, as the estimated value CCT$_A$ of 228 ms is larger than 226 ms while smaller than 405 ms.

When a short-circuit fault occurs at the midpoint of $\mathbf{Z}_1$, VSC1 becomes isolated from the rest of the system, resulting in deviations in the relative angle $\delta_{11}^{lm}$. Additionally, the power-sharing between the two areas is altered, and the relative angle between the two GFM converters also experiences a large-signal disturbance. The related transient waveforms of the four-converter system with optimized parameters are shown in \figref{Fig10_Waveform_4M}(c) and \figref{Fig10_Waveform_4M}(d), where two FCTs of 40 ms and 130 ms are examined. When the FCT equals 40 ms, the post-fault system remains stable as all relative angles return to the original SEP, and this outcome is correctly predicted by the estimated result, as CCT$_A$ equals 40 ms. For the case with the FCT of 130 ms, VSC1 and VSC2 lose synchronization as $\delta_{11}^{lm}$ undergoes a one-cycle oscillation, and the instability is also indicated by the proposed method, given that CCT$_A$ of 40 ms is shorter than 130 ms. These results further validate the effectiveness of the proposed method for stability evaluation of the four-converter systems. 

\vspace{-5pt}
\section{Conclusion}\label{section_6}
In this paper, a small-gain-like framework for large-signal stability evaluation of multi-converter systems is proposed. The large-signal angle dynamics is first modeled with a set of interconnected relative angle motions. Based on this, the small-gain-like property for multi-converter systems is established within certain relative angle limits, which allows the construction of a Lyapunov function that characterizes the dissipation capability of the whole system. Moreover, an ellipsoidal region within the angle-limit set is identified, and it serves as an effective estimate of the large-signal stability region of multi-converter systems, which offers strong potential for large-signal stability assessment and parameter design optimization of power systems. 
The proposed method is further applied to evaluate the large-signal stability of a paralleled system and a four-converter system, and their stability boundaries are directly solved and effectively validated through experiments.
{\color{black}
Future work will investigate possible refinements to the proposed framework by incorporating direction-dependent interaction effects and developing improved dissipation estimation strategies to further reduce conservativeness.
}

\ifCLASSOPTIONcaptionsoff
  \newpage
\fi
\vspace{-5pt}
\bibliographystyle{IEEEtran}
\bibliography{Reference}

\end{document}